\documentclass[letterpaper]{article} %

\usepackage{booktabs}
\usepackage{amsmath}
\usepackage{svg}
\usepackage{aaai2027}
\usepackage{times}  %
\usepackage{helvet}  %
\usepackage{courier}  %
\usepackage[hyphens]{url}  %
\usepackage{graphicx} %
\usepackage{natbib}  %
\usepackage{caption} %
\usepackage{algorithm}
\usepackage{algorithmic}

\usepackage{newfloat}
\usepackage{listings}
\usepackage{xspace}
\newcommand{\LTLf}{\ensuremath{\textsf{LTL}_f}\xspace}

\newcommand{\Next}{\ensuremath{\textsf{X}}\xspace}
\newcommand{\Until}{\ensuremath{\textsf{U}}\xspace}

\usepackage{amsmath}

\newcommand{\A}{\ensuremath{\mathcal{A}}\xspace}

\newcommand{\LTL}{\ensuremath{\text{LTL}}\xspace}

\newcommand{\LTLNext}{\textsf{X}\ }
\newcommand{\LTLWNext}{\textsf{X}_w\ }

\newcommand{\LTLUntil}{\ \textsf{U}\ }

\newcommand{\LTLAlways}{\textsf{G}\ }
\newcommand{\LTLEventually}{ \textsf{F}\  }

\newcommand{\naf}{\mathop{not}}

\DeclareCaptionStyle{ruled}{labelfont=normalfont,labelsep=colon,strut=off} %
\floatstyle{ruled}
\newfloat{listing}{tb}{lst}{}
\floatname{listing}{Listing}
\usepackage{amsthm}
\usepackage{amssymb}
\usepackage{thmtools}

\declaretheoremstyle[
  spaceabove=6pt, spacebelow=6pt,
  qed=$\blacksquare$
]{defstyle}

\declaretheoremstyle[
  spaceabove=6pt, spacebelow=6pt,
  qed=$\triangleleft$
]{exstyle}

\declaretheorem[style=defstyle, name=Definition]{definition}
\declaretheorem[style=exstyle,  name=Example]{example}
\declaretheorem[name=Proposition]{proposition}
\declaretheorem[name=Lemma]{lemma}
\declaretheorem[name=Corollary]{corollary}

\renewcommand{\A}{\mathcal{A}}
\newcommand{\Vars}{\mathsf{Vars}}
\newcommand{\Lits}{\mathsf{Lits}}
\renewcommand{\L}[1]{\mathsf{L}(#1)}
\newcommand{\base}[1]{\mathsf{var}(#1)}
\newcommand{\opplit}[1]{\overline{#1}}
\newcommand{\Comp}[2]{\mathsf{Comp}_{#1}(#2)}
\newcommand{\Mod}[1]{\mathsf{Mod}(#1)}
\newcommand{\asgn}[1]{\alpha_{#1}}
\newcommand{\traceof}[1]{\pi_{#1}}
\newcommand{\agrees}[3]{#1 \equiv_{#2} #3}
\newcommand{\featof}[1]{\mathsf{feat}(#1)}
\newcommand{\litof}[2]{\mathsf{lit}_{#1}(#2)}

\usepackage{comment}
\newif\ifappendix
\appendixtrue
\nocopyright

\ifappendix
  \newenvironment{appendixonly}{}{}
\else
  \excludecomment{appendixonly}
\fi

\title{Formal Explanations for Linear Temporal Logic over Finite Traces}

\author{
    Valeria Fionda,
    Antonio Ielo,
    Francesco Ricca
}
\affiliations{
    University of Calabria, Italy\\
    valeria.fionda@unical.it, antonio.ielo@unical.it, francesco.ricca@unical.it
}

\usepackage{bibentry}

\begin{document}

\maketitle

\begin{abstract}
Linear Temporal Logic over Finite Traces (\LTLf) is widely used in Artificial Intelligence to specify desired temporal behavior and determine whether finite traces satisfy or violate it. Explaining these outcomes is essential for debugging and trust. However, existing approaches typically capture either sufficient reasons or counterfactual causes, without clarifying their relationship. 
We introduce a native characterization of abductive (AXps) and contrastive (CXps) explanations directly over finite traces, based on \emph{locking} and \emph{unlocking} time-indexed events. We prove its equivalence to the classifier-based formulation, and show that CXps coincide with a recently proposed temporal causality semantics for \LTLf. 
This characterization enables an efficient implementation based on the enumeration of minimal unsatisfiable cores and minimal correction sets of ASP programs. Experiments on real-world event logs demonstrate the effectiveness of the approach.
\end{abstract}

\begin{figure*}
    \centering
    \includegraphics[width=.75
    \linewidth]{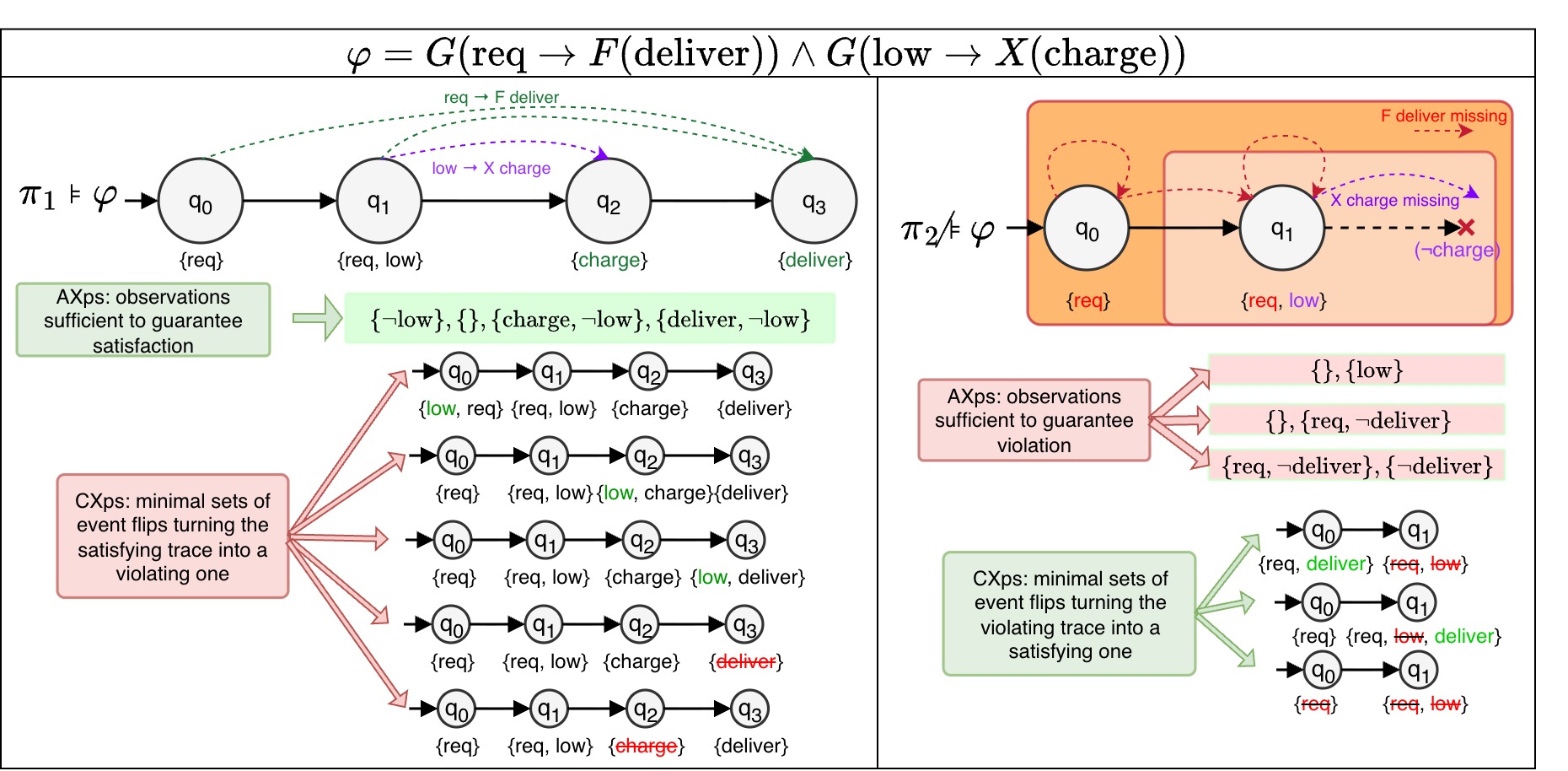}
    \caption{AXps (sufficient reasons) and CXps (minimal outcome-flipping interventions) for $\varphi$ on a compliant trace $\pi_1$ and a violating trace $\pi_2$. Sets are listed state by state; e.g. $\{\},\{low\}$ denotes $\{low_1\}$.}
    \label{fig:motivating}
\end{figure*}

\section{Introduction}
Linear Temporal Logic over Finite Traces~\cite{DBLP:conf/ijcai/GiacomoV13} (\LTLf) is a popular formalism for temporal reasoning in AI. It has been successfully applied to several domains, such as reinforcement learning~\cite{de2019foundations}, planning~\cite{bacchus1998planning, GiacomoR18}, robotics~\cite{kress2009temporal} and business process management~\cite{DBLP:journals/ife/AalstPS09,de2014monitoring}. 

In applications, \LTLf formulas are used to describe \emph{desired behavior}. A formula may describe the dynamics of a system, or the goals driving an agent's behavior. Traces (interpretations) represents states of a system over time, or a sequence of actions performed by an agent. 
When a trace $\pi$ satisfies a formula $\varphi$, we say that $\pi$ \emph{complies} with the specification; otherwise, it \emph{violates} it. In either case, being able to explain \emph{why} is important.

Figure~\ref{fig:motivating} illustrates the explanation problem via an autonomous delivery agent subject to
$
\varphi =
\LTLAlways(\mathit{req} \rightarrow \LTLEventually\mathit{deliver})
\wedge
\LTLAlways(\mathit{low} \rightarrow \LTLNext\mathit{charge}),
$
which requires every request to be eventually served and every low-battery to be immediately followed by charging. The trace $\pi_1$ (of Figure~\ref{fig:motivating}) satisfies $\varphi$, but this alone does not explain \emph{why}. We focus on two complementary kinds of explanation. A \emph{sufficient reason} is a set of time-indexed events and non-events that guarantees satisfaction: for $\pi_1$, the absence of $\mathit{low}$ throughout the trace already guarantees satisfaction of the charging obligation, no matter how the remaining observations are altered. A \emph{minimal intervention} is a minimal set of observations whose change flips the outcome: adding $\mathit{low}$ to the first time instant of $\pi$ not followed by $\mathit{charge}$ turns it into a violating trace. Conversely, $\pi_2$ violates $\varphi$: here a sufficient reason guarantees the violation  (e.g., knowing only that $\mathit{low}$ holds at the second time instant and $\mathit{charge}$ is absent at the third time instant makes \emph{every} trace agreeing with these two observations violate $\varphi$), while minimal interventions restore satisfaction. Thus the explanation problem answers two complementary questions: \emph{why does the trace satisfy or violate the specification?} and \emph{what must change to obtain the opposite outcome?}

These two kinds of explanation are precisely the central notions of formal explainable AI (FXAI)~\cite{DBLP:conf/aaai/IgnatievNM19,DBLP:conf/ijcai/ShihCD18}: \emph{abductive} (AXps) and \emph{contrastive} (CXps) explanations. Given a classifier and an input instance, an abductive explanation is a subset of feature values sufficient to guarantee the predicted class, whereas a contrastive explanation is a subset of features whose change yields a different class. Importantly, subset-minimal abductive and contrastive explanations are related by hitting-set duality~\cite{joao,reiter1987theory}, which also supports effective enumeration algorithms~\cite{DBLP:journals/constraints/LiffitonPMM16}.
In principle, an \LTLf formula can be cast as a binary classifier over time-indexed propositions, making these notions formally applicable. However, this encoding is not the usual way temporal formulas are understood and makes it harder to see explanations as observations that are kept fixed or allowed to change. Thus, a \emph{native characterization} of AXps and CXps for \LTLf is missing. Existing approaches capture either sufficient reasons for satisfaction~\cite{ferrere2015trace} or counterfactual causes of violation~\cite{meng2025causal}, but none connects the two perspectives or grounds them in hitting-set duality.
We fill this gap through the simple operations of \emph{locking} and \emph{unlocking} temporal observations: locked observations play the role of the feature values an AXp keeps fixed, whereas unlocked ones are those a CXp is allowed to alter.
AXps and CXps are thus defined directly over finite traces, with no classifier-based reformulation.
Our contributions are threefold: \textit{(i)} we characterize AXps and CXps for the satisfaction and violation of \LTLf specifications, directly over finite traces; \textit{(ii)} we prove that this characterization is equivalent to the classifier-based one, establish the hitting-set duality between AXps and CXps for \LTLf, and show that CXps coincide with the temporal causes of~\cite{meng2025causal} under fixed-length interventions;
\textit{(iii)} we reduce their computation to the enumeration of minimal unsatisfiable subsets and minimal correction sets, implement the approach in Answer Set Programming (ASP)~\cite{DBLP:journals/ngc/GelfondL91,DBLP:journals/cacm/BrewkaET11}, and evaluate it on real-world event logs.

\section{Related Work}

\paragraph{Formal Explanations.}
Our work instantiates the general framework of abductive and contrastive explanations~\cite{DBLP:conf/ijcai/ShihCD18,DBLP:conf/aaai/IgnatievNM19,joao,DBLP:journals/jolli/DarwicheH23} in the setting of \LTLf. 
However, the standard formulation treats the object being explained as a classifier over features. While this viewpoint is formally applicable to \LTLf, it is not natural for temporal-logic practitioners. We therefore provide an equivalent characterization stated directly in terms of \LTLf formulas and finite traces.
Similar specializations have been developed for settings in which the object being explained is not naturally represented as a classifier, such as ranking functions~\cite{DBLP:conf/jelia/ChiarielloM25} and finite-state machines~\cite{DBLP:journals/corr/abs-2602-13351}. Our work follows the same direction by providing a specialization for \LTLf formulas.

\paragraph{Temporal Implicants \& Temporal Causality.}
Implicants are partial interpretations guaranteeing a propositional formula’s truth or falsity and, for propositional binary classifiers, coincide with abductive explanations~\cite{joao}. 
In LTL, temporal implicants have only been studied under infinite-trace semantics~\cite{ferrere2015trace}, and implemented with a greedy algorithm computing a single, possibly non-minimal implicant. 
The connection to unsatisfiable cores is mentioned as a related problem but not further developed. We instead address finite traces, recast the problem in the FXAI framework, and provide an efficient implementation based on unsatisfiable cores and correction sets.

A recent work by~\cite{meng2025causal} proposes a revised version of \emph{cause} for \LTLf, specializing the Halpern's~\cite{DBLP:books/daglib/0044508}. It extends earlier works~\cite{DBLP:conf/cav/BeerBCOT09} that considers only singleton causes, to conjunctive causes. Neither line addresses the dual notion of sufficient reasons for satisfaction or violation. We will prove that this notion is closely related to CXps.

\paragraph{Unsatisfiable Cores in Temporal Reasoning.} 
Minimal unsatisfiable cores and correction sets~\cite{LiSa08,DBLP:journals/ai/Marques-SilvaJM17} are at the base of our implementation. 
\LTLf admits notions of unsatisfiable cores~\cite{DBLP:journals/jair/RoveriCFG24}, whose computation has been the focus of recent works~\cite{DBLP:journals/jair/RoveriCFG24,DBLP:journals/logcom/NiuXZLHS24,DBLP:conf/aaai/FiondaIR26,DBLP:conf/aaai/IeloMPR26}.  
Unsatisfiable cores explain \emph{why a formula is unsatisfiable}, whereas abductive explanations answer \emph{why a trace violates/satisfies a formula}. The first is a form of \emph{global} explanation, the latter is a form of \emph{local} explanation~\cite{DBLP:journals/csur/GuidottiMRTGP19,DBLP:conf/aaai/0001I22}.
Our implementation is based on bounded \LTLf satisfiability~\cite{lpnmmr22}, obtaining AXps and CXps as unsatisfiable cores and correction sets of a logic program~\cite{DBLP:journals/ai/AlvianoDFPR23}. The underlying solver is based on SAT techniques~\cite{DBLP:journals/constraints/LiffitonPMM16}.

\section{Answer Set Programming}
An \emph{atom} is an expression of the form $p(t_1,\ldots,t_n)$ where $p$ is a predicate, $t_1,\ldots,t_n$ are terms (i.e., constants, integer numbers, strings or variables), and $n$ is known as \emph{arity} of the predicate. 
A \emph{literal} is either an atom $a$ or its \emph{negation} $\naf a$, where $\naf$ denotes the negation as failure. 
A \emph{rule} $r$ is an expression of the form $h \leftarrow b_1,\ldots,b_n$ where $h$ is an atom referred to as \emph{head}, denoted by $H_r$, that can also be omitted, $n\geq 0$, and $b_1,\ldots,b_n$ is a conjunction of literals referred to as \emph{body}, denoted by $B_r$. 
In particular, a rule is said to be a \emph{constraint} if its head is omitted, while it is said to be a \emph{fact} if $n=0$.
A rule $r$ is \emph{safe} if each variable appears at least in one positive literal in the body of $r$. 
A \emph{program} in ASP is a finite set of safe rules.
Given a program $P$, and $r \in P$, $ground(r)$ is the set of ground instantiations of $r$ obtained by replacing variables in $r$ with constants in $P$; whereas $ground(P)$ is the union of ground instantiations of rules in $P$. 
Given a program $P$, the Herbrand's base of $P$, denoted by $\mathcal{B}_P$, is the set of atoms constructible from constants and predicate names occurring in $P$. An \textit{interpretation} $I$ for $P$ is a subset of $\mathcal{B}_P$, which is an answer set of $P$ if (i) $I$ is a model, i.e., for each rule $r \in ground(P)$ either the head of $r$ is true wrt $I$ or the body of $r$ is false wrt $I$; and (ii) $I$ is a minimal model of its GL-reduct~\cite{DBLP:journals/ngc/GelfondL91}.%
\paragraph{Minimal Unsatisfiable Subprograms.} %
Consider a program $P$ and a set of objective atoms $O \subseteq \mathcal{B}_P$. %
For $S \subseteq O$, $\text{enforce}(P,O,S)$ is the program obtained from $P$ by adding a choice over atoms in $O$ (i.e., 
$o_i \texttt{:- } no_i$
$no_i \texttt{:- } o_i$, for $i\in [1,n]$)
and a set of constraints of the form $\texttt{:- } \naf o$, for every $o \in S$.
Intuitively, $\text{enforce}(P,O,S)$ augments the program $P$ in such a way that the objective atoms can be arbitrarily chosen (i.e. either as true or false) but the atoms in $S$ are \emph{enforced} to be true. An \emph{unsatisfiable subset} for $P$ wrt\ the set of objective atoms $O$ is a set of atoms $U \subseteq O$ such that $\text{enforce}(P,O,U)$ is incoherent~\cite{DBLP:journals/ai/AlvianoDFPR23}. $US(P,O)$ denotes the set of unsatisfiable subsets of $P$ wrt\ $O$. An unsatisfiable subset $U \in US(P,O)$ is a \emph{minimal unsatisfiable subset} (MUS) of $P$ wrt\ $O$ iff for every $U' \subset U$, $U' \notin US(P,O)$.
An answer set $M$ of $P$ is minimal wrt $O \subset \mathcal{B_P}$ iff there exist no other answer set $M'$ of $P$ such that $(M \cap O) \subset (M' \cap O)$~\cite{DBLP:journals/ai/AlvianoDFPR23}.

More details on ASP are available in~\cite{DBLP:journals/ngc/GelfondL91,DBLP:journals/cacm/BrewkaET11,DBLP:series/synthesis/2012Gebser,DBLP:books/sp/Lifschitz19}.

\section{Linear Temporal Logic on Finite Traces}\label{sec:preliminaries}
Linear Temporal Logic on Finite Traces (\LTLf) \cite{DBLP:conf/ijcai/GiacomoV13} is a variant of Linear Temporal Logic (\LTL) \cite{DBLP:conf/focs/Pnueli77} where the models, called \emph{traces}, are assumed to be finite. An $\LTLf$ formula over the alphabet $\A$ is defined according to the following grammar:
$$\varphi ::= a \in \A \mid \lnot \varphi \mid \varphi \land \varphi  \mid  \LTLNext \varphi  \mid \varphi \LTLUntil \varphi $$

\noindent where $\LTLNext$ (``Next'') and $\LTLUntil$ (``Until'') are temporal operators. Standard propositional boolean ($\lor$, $\rightarrow$, $\leftrightarrow$, $\top$, $\bot$), and temporal (e.g., \emph{Eventually} \textsf{F}, \emph{Always} \textsf{G}, \textit{Weak Next} \textsf{X$_w$} and \textit{Release} \textsf{R}) shorthands are defined as usual.
An interpretation is a finite trace $\pi = \pi_0 \pi_1 \cdots \pi_{n-1}$, with $\pi_i \subseteq \A$. $|\pi| = n$ is the trace length, and $\pi(i) = \pi_i$ is the $i$-th state of $\pi$.
The satisfaction relation $\pi, i \models \varphi$ is defined inductively as follows:
\begin{itemize}
    \item[] $\pi, i \models a$ iff $a \in \pi(i)$
    \item[] $\pi, i \models \lnot \varphi$ iff $\pi, i \not\models \varphi$
    \item[] $\pi, i \models \varphi \land \psi$ iff $\pi, i \models \varphi$ and $\pi, i \models \psi$
    \item[] $\pi, i \models \LTLNext \varphi$ iff $i+1 < |\pi|$ and $\pi, i+1 \models \varphi$
    \item[] $\pi, i \models \varphi \LTLUntil \psi$ iff there exists $k$ with $i \le k < |\pi|$ such that $\pi, k \models \psi$ and for all $j$ with $i \le j < k$, $\pi, j \models \varphi$
\end{itemize}

$\pi$ is a \textit{model} of $\varphi$ if $\pi, 0 \models \varphi$, denoted as $\pi \models \varphi$. We denote by $\Mod\varphi$ the set of models of $\varphi$, that is $\Mod\varphi = \{\pi \in (2^\A)^*: \pi\models\varphi\}$.
The satisfiability problem for \LTLf, i.e., deciding whether there exists a trace $\pi$ such that $\pi \models \varphi$, is PSPACE-complete~\cite{DBLP:conf/ijcai/GiacomoV13}. Complexity drops to NP-complete restricting the problem to traces of bounded length~\cite{DBLP:journals/jair/FiondaG18}.

\section{Formal Explanations} Let $\mathcal{F}$ be a finite set of \emph{features}. Each feature $i \in \mathcal{F}$ is associated to a finite \emph{domain} $D_i$. The corresponding \emph{feature space} $\mathbb{F}$ is the Cartesian product of the feature domains, 
$\mathbb{F} = \prod_{i \in \mathcal{F}} D_i$. Each element $\alpha \in \mathbb{F}$ is a total assignment of values to the features, with $\alpha(i) \in D_i$ for every $i \in \mathcal{F}$. We refer to such an assignment as an \emph{instance}. A classifier is a total function $f: \mathbb{F} \mapsto \mathcal{K}$, where $\mathcal{K}$ is a finite set of class labels. Given $\mathcal{S} \subseteq \mathcal{F}$ and two instances $\alpha,\beta \in \mathbb{F}$, we say $\alpha$ and $\beta$ are \textit{compatible} in $\mathcal{S}$ and write $\agrees{\alpha}{\mathcal{S}}{\beta}$ if, and only if, $\alpha(i)=\beta(i) \text{ for every } i \in \mathcal{S}.$

An \emph{explanation problem} is a tuple $(f,\alpha,c)$, where $f: \mathbb{F} \mapsto \mathcal{K}$ is a classifier, $\alpha \in \mathbb{F}$ is an instance, and $c \in \mathcal{K}$ is a class label such that $f(\alpha)=c$. Two main types of formal explanations are in the literature: \textit{abductive} (AXp) and \textit{contrastive} (CXp) explanations. An abductive explanation for $(f,\alpha,c)$ answers the question \textit{Why was class $c$ assigned to $\alpha$?}, while a contrastive explanation addresses the question \textit{Why was $\alpha$ not assigned a class different from $c$?}.

\paragraph{Weak Abductive Explanations (WAXp).} We say that $\mathcal{X} \subseteq \mathcal{F}$ is a WAXp for $(f,\alpha,c)$ if the following holds:

$$\forall \beta \in \mathbb{F}.\ \agrees{\beta}{\mathcal{X}}{\alpha} \implies f(\beta) = c$$

\paragraph{Weak Contrastive Explanations (WCXp).} We say that $\mathcal{Y} \subseteq \mathcal{F}$ is a WCXp for $(f,\alpha,c)$ if the following holds:

$$\exists \beta \in \mathbb{F}.\ \agrees{\beta}{\mathcal{F}\setminus\mathcal{Y}}{\alpha} \land f(\beta) \ne c$$

Intuitively, an abductive explanation identifies a set of feature values that is sufficient to guarantee the prediction $c$, that is once the feature in $\mathcal{X}$ are fixed to the values they assume in $\alpha$, every compatible instance is still classified as $c$. Instead, a contrastive explanation identifies a set of features whose values can be changed to alter the prediction, that is, keeping all features outside $\mathcal{Y}$ fixed to the values in $\alpha$, there exists an assignment to the features in $\mathcal{Y}$ for which the classifier outputs a class different from $c$.
Subset-minimal WAXp and WCXp are called abductive explanations (AXp) and contrastive explanations (CXp), respectively. AXps and CXps are related by \emph{hitting-set duality}: every AXp is a minimal hitting set of the set of all CXps, and vice versa~\cite{joao}.

\section{AXp and CXp for \LTLf}
In this section, we reformulate abductive and contrastive explanations for \LTLf. In the rest of the paper, we fix a finite alphabet $\A$ and a trace length $n \in \mathbb{N}$. We assume all traces are sequences in $(2^\mathcal{A})^n$. 

\medskip
\noindent\textbf{Notation.} We adopt a set-based representation of traces based on \emph{temporal literals}. Let
$\Vars = \{a_i : a \in \A,\ 0 \le i < n\}$
denote the set of time-indexed atoms, where $a_i$ indicates that $a$ holds at time instant $i$. The set of temporal literals is defined as
$\Lits = \bigcup_{a_i \in \Vars} \{a_i, \lnot a_i\}.$
For every temporal literal $\ell \in \Lits$, we denote by $\base{\ell}$ its underlying time-indexed atom, i.e., $\base{a_i}=a_i$ and $\base{\lnot a_i}=a_i$. Similarly, $\opplit{\ell}$ denotes the opposite literal, that is, $\opplit{a_i}=\lnot a_i$ and $\opplit{\lnot a_i}=a_i$. A set of temporal literals $E \subseteq \Lits$ is said to be \emph{invalid} if it contains both a literal and its complement, \emph{valid} otherwise, and \emph{complete} if, for every $a_i \in \Vars$, exactly one of $a_i$ or  $\lnot a_i$ belongs to $E$.
A trace $\pi$ is represented by the valid, complete set of temporal literals:
$$\L{\pi} = \bigcup_i\left(\{a_i: a \in \pi(i)\} \cup \{\lnot a_i: a\not\in \pi(i)\}\right).$$

For instance, the violating trace $\pi_2$ of Figure~\ref{fig:motivating}, of length $n=2$, is represented by
$\L{\pi_2} = \{\mathit{req}_0, \mathit{req}_1, \mathit{low}_1\} \cup \{\lnot\mathit{deliver}_0, \lnot\mathit{low}_0, \lnot\mathit{charge}_0, \lnot\mathit{deliver}_1, \lnot\mathit{charge}_1\}$. Given a set of temporal literals $E$, two traces $\pi$ and $\tau$ are \emph{compatible} on $E$ if they agree on every literal in $E$, that is, $E \subseteq \L{\pi} \cap \L{\tau}$. Accordingly, we define the set of all traces compatible with $\pi$ on $E$ as:
$$\Comp{\pi}{E} = \{\tau \in (2^\A)^{|\pi|}: E \subseteq \L{\pi} \cap \L{\tau}\}.$$

Intuitively, $\Comp{\pi}{E}$ consists of all traces having the same length of $\pi$ that preserve the behavior of $\pi$ specified by $E$ in terms of events happening (positive literals) and not happening (negative literals). Since $E$ is not required to be complete, only the events explicitly described by its literals are fixed, while all remaining propositions at all time instants are left unconstrained.

\medskip
\noindent\textbf{Explanation problem.}
An \emph{explanation problem} is a tuple $(\pi,\varphi)$, where $\pi \in (2^\A)^n$ is a trace and $\varphi$ is an \LTLf formula. We now reformulate weak abductive and weak contrastive explanations using the notation introduced above. Since $\pi\not\models\varphi \Leftrightarrow \pi\models\lnot\varphi$, we focus on explanation settings $(\pi,\varphi)$ such that $\pi\not\models\varphi$ without loss of generality. The complementary case is handled by replacing $\varphi$ with $\neg\varphi$. Intuitively, a set $E \subseteq \L{\pi}$ identifies the temporal literals
that are \emph{locked}, i.e., forced to retain the truth value they have in $\pi$, while all literals outside $E$ are \emph{unlocked} and free to change: $\Comp{\pi}{E}$ is precisely the set of traces obtained by
locking $E$. An abductive explanation locks enough observations to make the violation unavoidable; a contrastive explanation unlocks enough observations to make satisfaction possible. 

\begin{definition}[WAXp]\label{def:waxp}
Let $(\pi,\varphi)$ be an explanation problem. We say that $E \subseteq \L{\pi}$ is a WAXp for $(\pi,\varphi)$ if:
    $\pi\not\models\varphi$ and $\Comp{\pi}{E} \cap \Mod{\varphi} = \emptyset$
\end{definition}

\begin{definition}[WCXp]\label{def:wcxp}
Let $(\pi,\varphi)$ be an explanation problem. We say that $E \subseteq \L{\pi}$ is a WCXp for $(\pi,\varphi)$ if:
$\pi\not\models\varphi$ and $\Comp{\pi}{\L{\pi} \setminus E} \cap \Mod{\varphi} \ne \emptyset$
\end{definition}

\begin{definition}[AXp, CXp]
Let $(\pi,\varphi)$ be an explanation problem. A set $E \subseteq \L{\pi}$
is an \emph{AXp} (resp.\ \emph{CXp}) for $(\pi,\varphi)$ if it is a WAXp
(resp.\ WCXp) and no $E' \subset E$ is a WAXp (resp.\ WCXp).
\end{definition}

\begin{example}
Consider $\pi_2$ and $\varphi$ from Figure~\ref{fig:motivating}. The set $E = \{\mathit{low}_1\}$ is an AXp for $(\pi_2,\varphi)$: locking $\mathit{low}_1$ is enough, since every trace of length $2$ in which $\mathit{low}$ holds at the last instant violates $\LTLAlways(\mathit{low} \rightarrow \LTLNext\mathit{charge})$, regardless of all other observations; and $E$ is minimal, as the empty set is not a WAXp. Dually, $\{\neg\mathit{deliver}_0, \mathit{req}_1, \mathit{low}_1\}$ is a CXp: unlocking these three literals admits a model of $\varphi$, where $\mathit{deliver}$ is added at instant $0$ and $\mathit{req}, \mathit{low}$ are removed at instant $1$, and no proper subset is a WCXp.
\end{example}

\section{Equivalence to Classifier Explanations}

In this section, we formalize the \LTLf-as-binary-classifier setting, establish a correspondence between classifier-based explanations and our reformulation, and prove that the two formulations are equivalent.

\subsection{Encoding \LTLf Formulas as Binary Classifiers} 
Let $\mathcal{F} = \Vars$ be binary features, each with domain $\{\bot, \top\}$--that is, we associate one Boolean feature with each time-indexed atom $a_i$--and let $\mathcal{K} = \{0, 1\}$. A trace $\pi \in (2^\A)^n$ is represented by the assignment $\asgn{\pi}$ defined by $\asgn{\pi}(a_i) = \top \Leftrightarrow a \in \pi(i)$. We denote by $\mathbb{F}$ the set of such assignments. Conversely, every assignment $\alpha \in \mathbb{F}$ introduces a trace $\traceof{\alpha}$, that is $\traceof{\alpha}(i) = \{a \in \A: \alpha(a_i) = \top\}$. 

Every \LTLf formula $\varphi$ induces a binary classifier $f_\varphi: \mathbb{F} \mapsto \{0,1\}$, defined by $$f_\varphi(\alpha)=
\begin{cases}
1 & \text{if }\traceof{\alpha}\models\varphi,\\
0 & \text{otherwise.}
\end{cases}$$ 

As in the previous section, we restrict our attention to instances satisfying $f_\varphi(\asgn{\pi})=0$, since
$\pi\not\models\varphi$ iff $\pi\models\neg\varphi$.

A set $\mathcal{X} \subseteq \mathcal{F}$ is a WAXp for $f_\varphi(\asgn{\pi}) = 0$ if:
$$\forall(\alpha \in \mathbb{F}).\ \agrees{\alpha}{\mathcal{X}}{\asgn{\pi}} \implies f_\varphi(\alpha) = 0.$$

A set $\mathcal{Y} \subseteq \mathcal{F}$ is a WCXp for $f_\varphi(\asgn{\pi}) = 0$ if:
$$\exists(\alpha \in \mathbb{F}).\ \agrees{\alpha}{\mathcal{F}\setminus\mathcal{Y}}{\asgn{\pi}}\ \land f_\varphi(\alpha) = 1.$$

Both formulations use the same names; they are distinguished by their arguments, as a WAXp (WCXp) for $f_\varphi(\asgn{\pi})=0$ is a set of \emph{features}, whereas a WAXp (WCXp) for $(\pi,\varphi)$, as per Definitions~\ref{def:waxp} and~\ref{def:wcxp}, is a set of \emph{temporal literals}. When needed, we call the former \emph{classifier-based} and the latter \emph{trace-based}.

To relate the two formulations, we introduce two mappings between sets of features and sets of temporal literals.
Given a trace $\pi$ and a set of feature $\mathcal{S} \subseteq \mathcal{F}$, we denote by $\litof{\pi}{\mathcal{S}}$ the set of literals of $\pi$ over the features in $\mathcal{S}$, that is,
$$\litof{\pi}{\mathcal{S}} = \{\ell \in \L{\pi}: \base{\ell} \in \mathcal{S}\}.$$

Conversely, for a set of temporal literals $E\subseteq\L{\pi}$, we denote by $\featof{E}$ the corresponding set of features:
$$\featof{E} = \{\base{\ell}: \ell \in E\}.$$

Observe that, by construction, for every trace $\pi$ and every $\mathcal S\subseteq\mathcal F$, the following properties hold:
\begin{enumerate}
    \item $\featof{\litof{\pi}{\mathcal{S}}} = \mathcal{S}$;
    \item $\litof{\pi}{\featof{E}} = E$, and
    \item $\L{\pi}\setminus\litof{\pi}{\mathcal S}=\litof{\pi}{\mathcal F\setminus\mathcal S}.$
\end{enumerate}

The equivalence proofs rely on the following observation: agreement between assignments corresponds to compatibility between the corresponding traces. More precisely, if two assignments agree on a set of features $\mathcal{S}$, then the corresponding traces share the temporal literals in $\litof{\pi}{\mathcal S}$.

\begin{proposition}\label{prop:compat-agreement}
Let $\pi,\tau$ be traces and let $E \subseteq \L{\pi}$. Then
$$\tau \in \Comp{\pi}{E} \quad\text{if and only if}\quad \agrees{\asgn{\tau}}{\featof{E}}{\asgn{\pi}}.$$
\end{proposition}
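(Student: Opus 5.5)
The plan is to unfold both sides down to membership of individual time-indexed atoms, and then compare them one atom at a time. Two facts drive the argument. First, for any trace $\sigma$ and atom $a_i$, the definitions of $\L{\sigma}$ and $\asgn{\sigma}$ give
$a_i \in \L{\sigma} \iff a\in\sigma(i) \iff \asgn{\sigma}(a_i)=\top$. Second, since $\L{\sigma}$ is valid and complete, $\lnot a_i \in \L{\sigma} \iff \asgn{\sigma}(a_i)=\bot$.

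We also note that $\tau\in\Comp{\pi}{E}$ requires $E\subseteq \L{\pi}\cap\L{\tau}$. Because $E\subseteq\L{\pi}$ is part of the hypothesis, this reduces to $E\subseteq\L{\tau}$. So the statement to prove becomes
$$E\subseteq\L{\tau} \iff \forall x\in\featof{E}.\ \asgn{\tau}(x)=\asgn{\pi}(x).$$

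For the forward direction, take $x\in\featof{E}$. Pick $\ell\in E$ with $\base{\ell}=x$, and say $x=a_i$. If $\ell=a_i$, then $\ell\in\L{\pi}$ and $\ell\in\L{\tau}$, so both assignments map $a_i$ to $\top$. If $\ell=\lnot a_i$, the same reasoning shows both map it to $\bot$. Either way the two assignments agree on $x$.

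For the converse, take $\ell\in E$ and let $x=\base{\ell}$, which lies in $\featof{E}$. By assumption $\asgn{\tau}(x)=\asgn{\pi}(x)$. Since $\ell\in\L{\pi}$, the literal $\ell$ records exactly the value $\asgn{\pi}(x)$, by the two facts above. That value equals $\asgn{\tau}(x)$, so $\ell\in\L{\tau}$. Alternatively, this direction follows from property~2 ($\litof{\pi}{\featof{E}}=E$) by checking that $\litof{\pi}{\mathcal S}=\litof{\tau}{\mathcal S}$ whenever the two assignments agree on $\mathcal S$.

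There is no real obstacle here. The only point needing care is using validity and completeness of $\L{\cdot}$ to pass between negative literals and the value $\bot$, together with the hypothesis $E\subseteq\L{\pi}$. That hypothesis is what lets each feature in $\featof{E}$ pin down a single literal.
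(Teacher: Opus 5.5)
Your proposal is correct and follows essentially the same argument as the paper. Both directions unfold $\Comp{\pi}{E}$ to $E\subseteq\L{\pi}\cap\L{\tau}$ and use $E\subseteq\L{\pi}$ to pass between shared literals and agreement on $\featof{E}$. Your explicit case split on $a_i$ versus $\lnot a_i$ only spells out what the paper's proof leaves implicit.
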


\begin{appendixonly}
\begin{proof}
$(\rightarrow)$ Assume that $\tau \in \Comp{\pi}{E}$. By definition, $E \subseteq \L{\pi}\cap\L{\tau}$. Hence, every temporal literal in $E$ belongs to both $\L{\pi}$ and $\L{\tau}$. Since $E \subseteq \L{\pi}$, every feature in $\featof{E}$ is assigned the same truth value by $\asgn{\pi}$ and $\asgn{\tau}$. Therefore, $\agrees{\asgn{\tau}}{\featof{E}}{\asgn{\pi}}$. $(\leftarrow)$ Assume that $\agrees{\asgn{\tau}}{\featof{E}}{\asgn{\pi}}$. Then, for every feature $a_i \in \featof{E}$, the assignments $\asgn{\pi}$ and $\asgn{\tau}$ are the same. Since $E \subseteq \L{\pi}$, each literal in $E$ also belongs to $\L{\tau}$. Thus, $E \subseteq \L{\pi}\cap\L{\tau}$, which implies that $\tau \in \Comp{\pi}{E}$.
\end{proof}
\end{appendixonly}

Using Proposition~\ref{prop:compat-agreement}, %
we can now prove the equivalence of the two notions of explanation. We start with WAXps. 

\begin{proposition}\label{prop:waxp-equivalence}
Let $(\pi,\varphi)$ be an explanation problem such that $\pi\not\models\varphi$, and let $\mathcal{X}\subseteq\mathcal{F}$.
Then $\mathcal{X}$ is a WAXp for $f_\varphi(\asgn{\pi})=0$ if and only if $\litof{\pi}{\mathcal{X}}$ is a WAXp for $(\pi,\varphi)$.
\end{proposition}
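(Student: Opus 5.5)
The argument is a chain of equivalences. Set $E = \litof{\pi}{\mathcal{X}}$. By construction $E \subseteq \L{\pi}$, and by property~1 of the mappings, $\featof{E} = \mathcal{X}$. Since $\pi\not\models\varphi$ is assumed, the first conjunct of Definition~\ref{def:waxp} holds automatically. So it remains to show that
$$\forall\alpha\in\mathbb{F}.\ \agrees{\alpha}{\mathcal{X}}{\asgn{\pi}} \implies f_\varphi(\alpha)=0$$
holds if and only if $\Comp{\pi}{E}\cap\Mod{\varphi}=\emptyset$.

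The key tool is the bijection between assignments and traces of length $n$: $\alpha \mapsto \traceof{\alpha}$, with inverse $\tau \mapsto \asgn{\tau}$. I will first note, directly from the definitions, that $\asgn{\traceof{\alpha}}=\alpha$ and $\traceof{\asgn{\tau}}=\tau$. By definition of $f_\varphi$, this gives $f_\varphi(\asgn{\tau})=1$ iff $\tau\models\varphi$, that is, iff $\tau\in\Mod{\varphi}$. Quantifying over all $\alpha\in\mathbb{F}$ is therefore the same as quantifying over all $\tau\in(2^\A)^n$ via $\alpha=\asgn{\tau}$.

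With this reindexing, the classifier condition becomes: for every trace $\tau$ of length $n$, if $\agrees{\asgn{\tau}}{\featof{E}}{\asgn{\pi}}$ then $\tau\notin\Mod{\varphi}$. Proposition~\ref{prop:compat-agreement} replaces the antecedent with $\tau\in\Comp{\pi}{E}$. The condition then reads $\forall\tau.\ \tau\in\Comp{\pi}{E}\Rightarrow\tau\notin\Mod{\varphi}$, which is exactly $\Comp{\pi}{E}\cap\Mod{\varphi}=\emptyset$. Every step is an equivalence, so both directions of the proposition follow at once.

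The only point needing care is the reindexing. Everything must range over traces of length exactly $n=|\pi|$, which matches the definition of $\Comp{\pi}{E}$, and the bijection must be verified so that no assignment or trace is missed in either direction. The rest is routine unfolding of definitions.
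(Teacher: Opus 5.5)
Your proof is correct and follows essentially the same route as the paper: both reduce the claim to Proposition~\ref{prop:compat-agreement} combined with $\featof{\litof{\pi}{\mathcal{X}}}=\mathcal{X}$ and the correspondence $\alpha \leftrightarrow \traceof{\alpha}$. The paper proves the two implications separately. You instead give a single chain of equivalences after reindexing over traces, and you state explicitly the identities $\asgn{\traceof{\alpha}}=\alpha$ and $\traceof{\asgn{\tau}}=\tau$, which the paper uses only implicitly.
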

\begin{appendixonly}
\begin{proof}%
$(\rightarrow)$ Assume that $\mathcal{X}$ is a WAXp for
$f_\varphi(\asgn{\pi})=0$, and let $\tau\in\Comp{\pi}{\litof{\pi}{\mathcal{X}}}$. By Proposition~\ref{prop:compat-agreement},
$\agrees{\asgn{\tau}} {\featof{\litof{\pi}{\mathcal{X}}}}
{\asgn{\pi}}$. Since $\featof{\litof{\pi}{\mathcal{X}}}=\mathcal{X}$, we obtain $\agrees{\asgn{\tau}}{\mathcal{X}}{\asgn{\pi}}$. Because $\mathcal{X}$ is a WAXp, $f_\varphi(\asgn{\tau})=0$, which implies
$\tau\not\models\varphi$. Therefore, $\Comp{\pi}{\litof{\pi}{\mathcal{X}}}\cap\Mod{\varphi}=\emptyset$, and hence $\litof{\pi}{\mathcal{X}}$ is a WAXp for $(\pi,\varphi)$. $(\leftarrow)$ Assume that $\litof{\pi}{\mathcal{X}}$ is a WAXp for $(\pi,\varphi)$. Let $\alpha\in\mathbb{F}$ be such that $\agrees{\alpha}{\mathcal{X}}{\asgn{\pi}}$. Let $\tau=\traceof{\alpha}$. Since $\featof{\litof{\pi}{\mathcal{X}}}
= \mathcal{X}$, Proposition~\ref{prop:compat-agreement} implies that $\tau\in\Comp{\pi}{\litof{\pi}{\mathcal{X}}}$.
Because $\litof{\pi}{\mathcal{X}}$ is a WAXp for $(\pi,\varphi)$, $\tau\not\models\varphi$. Hence $f_\varphi(\alpha)=0$, which proves that $\mathcal{X}$ is a WAXp for $f_\varphi(\asgn{\pi})=0$.
\end{proof}
\end{appendixonly}

An analogous correspondence holds for WCXp.

\begin{proposition}\label{prop:wcxp-equivalence}
Let $(\pi,\varphi)$ be an explanation problem such that $\pi\not\models\varphi$ and let $\mathcal{Y} \subseteq \mathcal{F}$. Then $\mathcal{Y}$ is a WCXp for $f_\varphi(\asgn{\pi}) = 0$ if and only if $\litof{\pi}{\mathcal{Y}}$ is a WCXp for $(\pi,\varphi)$.
\end{proposition}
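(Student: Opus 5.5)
The proof will mirror that of Proposition~\ref{prop:waxp-equivalence}, with one extra step: rewrite the complement of the unlocked set using property~3. Set $E=\litof{\pi}{\mathcal Y}$. By property~3 we have $\L{\pi}\setminus E=\litof{\pi}{\mathcal F\setminus\mathcal Y}$, and by property~1 we have $\featof{\L{\pi}\setminus E}=\mathcal F\setminus\mathcal Y$. Applying Proposition~\ref{prop:compat-agreement} to the set $\L{\pi}\setminus E\subseteq\L{\pi}$ then gives, for every trace $\tau$,
$$\tau\in\Comp{\pi}{\L{\pi}\setminus E}\iff \agrees{\asgn{\tau}}{\mathcal F\setminus\mathcal Y}{\asgn{\pi}}.$$
Both directions of the statement follow from this equivalence.

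$(\rightarrow)$ Suppose $\mathcal Y$ is a WCXp for $f_\varphi(\asgn{\pi})=0$. Then there is a witness $\alpha$ with $\agrees{\alpha}{\mathcal F\setminus\mathcal Y}{\asgn{\pi}}$ and $f_\varphi(\alpha)=1$. Let $\tau=\traceof{\alpha}$. Since $\traceof{\cdot}$ and $\asgn{\cdot}$ are mutually inverse bijections, $\asgn{\tau}=\alpha$. The displayed equivalence then gives $\tau\in\Comp{\pi}{\L{\pi}\setminus E}$. Moreover, $f_\varphi(\alpha)=1$ means $\tau\models\varphi$, so $\tau\in\Mod{\varphi}$. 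The intersection is therefore nonempty. Together with the hypothesis $\pi\not\models\varphi$, this shows that $E$ is a WCXp for $(\pi,\varphi)$.

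$(\leftarrow)$ Suppose $E$ is a WCXp for $(\pi,\varphi)$. Then some $\tau\in\Comp{\pi}{\L{\pi}\setminus E}\cap\Mod{\varphi}$ exists. Take $\alpha=\asgn{\tau}$. The displayed equivalence gives $\agrees{\alpha}{\mathcal F\setminus\mathcal Y}{\asgn{\pi}}$, and since $\tau\models\varphi$ we have $f_\varphi(\alpha)=1$. Hence $\mathcal Y$ is a WCXp for $f_\varphi(\asgn{\pi})=0$.

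The only delicate point is the complement bookkeeping. We need the set of locked literals $\L{\pi}\setminus\litof{\pi}{\mathcal Y}$ to correspond exactly to the fixed features $\mathcal F\setminus\mathcal Y$, and this is what properties~3 and~1 provide. We also need the bijection between traces of length $n$ and assignments, i.e.\ $\asgn{\traceof{\alpha}}=\alpha$ and $\traceof{\asgn{\tau}}=\tau$, which holds directly from the definitions. Everything else repeats the structure of the WAXp proof, with a universal quantifier replaced by an existential one.
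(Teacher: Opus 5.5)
Your proof is correct and follows essentially the same route as the paper. You rewrite the locked set $\L{\pi}\setminus\litof{\pi}{\mathcal Y}$ as $\litof{\pi}{\mathcal F\setminus\mathcal Y}$ via property~3, then use property~1 and Proposition~\ref{prop:compat-agreement} to turn compatibility into agreement on $\mathcal F\setminus\mathcal Y$, and transfer the witness in each direction; you are also slightly more explicit than the paper in stating $\asgn{\traceof{\alpha}}=\alpha$, which the paper uses tacitly.
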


\begin{appendixonly}
\begin{proof} %
$(\rightarrow)$ Assume that $\mathcal{Y}$ is a WCXp for $f_\varphi(\asgn{\pi}) = 0$. Then there exists $\alpha \in \mathbb{F}$ such that $\agrees{\alpha}{\mathcal{F}\setminus\mathcal{Y}}{\asgn{\pi}}$ and $f_\varphi(\alpha)=1$. Let $\tau=\traceof{\alpha}$. 
By Proposition~\ref{prop:compat-agreement}, $\tau \in \Comp{\pi}{\litof{\pi}{\mathcal{F}\setminus\mathcal{Y}}}$ that is equivalent to $\tau\in\Comp{\pi}{\L{\pi}\setminus\litof{\pi}{\mathcal{Y}}}$ (property 3 above).
Moreover, $f_\varphi(\alpha)=1$ implies $\tau\models\varphi$. Hence $\Comp{\pi}{\L{\pi}\setminus\litof{\pi}{\mathcal{Y}}}\cap\Mod{\varphi}\ne\emptyset$, which proves that $\litof{\pi}{\mathcal{Y}}$ is a WCXp for $(\pi,\varphi)$. $(\leftarrow)$ Assume $\litof{\pi}{\mathcal{Y}}$ being a WCXp for $(\pi,\varphi)$. Then there exists $\tau \in \Comp{\pi}{\L{\pi}\setminus\litof{\pi}{\mathcal{Y}}}\cap\Mod{\varphi}$. By property 3 above, $\tau\in\Comp{\pi}{\litof{\pi}{\mathcal{F}\setminus\mathcal{Y}}}$. By Proposition~\ref{prop:compat-agreement}, $\agrees{\asgn{\tau}}{\featof{\litof{\pi}{\mathcal{F}\setminus\mathcal{Y}}}}{\asgn{\pi}}$. Since $\featof{\litof{\pi}{\mathcal{F}\setminus\mathcal{Y}}}=\mathcal{F}\setminus\mathcal{Y}$ (property 1), we have $\agrees{\asgn{\tau}}{\mathcal{F}\setminus\mathcal{Y}}{\asgn{\pi}}$. 
Finally, $\tau\models\varphi$ implies $f_\varphi(\asgn{\tau})=1$, and thus $\mathcal{Y}$ is a WCXp for $f_\varphi(\asgn{\pi})=0$.
\end{proof}
\end{appendixonly}

The previous propositions establish a bijective correspondence between the two formulations, i.e. each set $E \subseteq \L{\pi}$ that is a WAXp (WCXp, respectively) for $(\pi,\varphi)$ can be decoded into the set $\featof{E} \subseteq \mathcal{F}$ that is a WAXp (WCXp, respectively) for $f_\varphi(\asgn{\pi})=0$. The result holds since $\featof{\litof{\pi}{\mathcal{S}}}=\mathcal{S}$ and $\litof{\pi}{\featof{E}}=E$. Moreover, from the fact that both preserve subsets, we have that the statements also hold for AXp and CXp.

\begin{corollary}\label{cor:minimal-equivalence}
Let $(\pi,\varphi)$ be an explanation problem with $\pi\not\models\varphi$. Then $\mathcal{X}$ is an AXp (resp.\ CXp) for $f_\varphi(\asgn{\pi})=0$ if and only if $\litof{\pi}{\mathcal{X}}$ is an AXp (resp.\ CXp) for $(\pi,\varphi)$.
\end{corollary}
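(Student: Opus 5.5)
The argument rests on Propositions~\ref{prop:waxp-equivalence} and~\ref{prop:wcxp-equivalence}, together with the fact that the map $\mathcal S\mapsto\litof{\pi}{\mathcal S}$ is an order isomorphism. First I would show that $\litof{\pi}{\cdot}$ is a bijection between subsets of $\mathcal F$ and subsets of $\L{\pi}$, with inverse $\featof{\cdot}$; this follows from properties 1 and 2. Next I would check that it preserves and reflects strict inclusion. Since $\L{\pi}$ is complete and valid, each feature $a_i$ corresponds to exactly one literal of $\L{\pi}$. Hence $\mathcal S'\subset\mathcal S$ iff $\litof{\pi}{\mathcal S'}\subset\litof{\pi}{\mathcal S}$. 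Likewise, for $E'\subset E\subseteq\L{\pi}$ we get $\featof{E'}\subset\featof{E}$, and $E'=\litof{\pi}{\featof{E'}}$.

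With this in hand, the AXp case goes as follows.

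($\Rightarrow$) Suppose $\mathcal X$ is an AXp for $f_\varphi(\asgn{\pi})=0$. By Proposition~\ref{prop:waxp-equivalence}, $\litof{\pi}{\mathcal X}$ is a WAXp for $(\pi,\varphi)$. Assume, for contradiction, that some $E'\subset\litof{\pi}{\mathcal X}$ is a WAXp. Then $E'\subseteq\L{\pi}$, so $E'=\litof{\pi}{\featof{E'}}$. Applying Proposition~\ref{prop:waxp-equivalence} to $\featof{E'}$ shows that $\featof{E'}$ is a classifier-based WAXp. By the order isomorphism, $\featof{E'}\subset\featof{\litof{\pi}{\mathcal X}}=\mathcal X$, which contradicts the minimality of $\mathcal X$.

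($\Leftarrow$) This direction is symmetric. Suppose $\litof{\pi}{\mathcal X}$ is an AXp and some $\mathcal X'\subset\mathcal X$ is a classifier-based WAXp. Then $\litof{\pi}{\mathcal X'}$ is a trace-based WAXp that is strictly contained in $\litof{\pi}{\mathcal X}$, a contradiction.

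The CXp case repeats this argument verbatim, with Proposition~\ref{prop:wcxp-equivalence} in place of Proposition~\ref{prop:waxp-equivalence}.

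The only step needing any care is the strict-inclusion claim, i.e.\ that distinct features yield distinct literals of $\L{\pi}$, so strictness is not lost. This is where completeness and validity of $\L{\pi}$ are used: each $a_i$ contributes exactly one of $a_i,\lnot a_i$. Everything else is a direct transfer of minimality through the bijection.
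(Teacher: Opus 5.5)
Your proof is correct and takes the same approach as the paper. The paper also obtains the corollary from Propositions~\ref{prop:waxp-equivalence} and~\ref{prop:wcxp-equivalence}, using that $\litof{\pi}{\cdot}$ and $\featof{\cdot}$ are mutually inverse and inclusion-preserving so that minimality transfers in both directions; you just write out the contradiction arguments and the strict-inclusion point that the paper leaves implicit.
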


By AXp/CXp duality~\cite{joao} and since $\litof{\pi}{\cdot}$ is an intersection- and inclusion-preserving bijection between $\mathcal{F}$ and $\L{\pi}$, the following result holds.

\begin{corollary}[Duality]
Let $(\pi,\varphi)$ be an explanation problem with $\pi\not\models\varphi$. Then every AXp for $(\pi,\varphi)$ is a minimal hitting set of the set of all CXps for $(\pi,\varphi)$, and viceversa.
\end{corollary}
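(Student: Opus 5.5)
The plan is to transfer the classifier-based duality of~\cite{joao} to the trace-based setting along the map $\litof{\pi}{\cdot}$, using Corollary~\ref{cor:minimal-equivalence} as the bridge. Let $\mathbb{A}_f$ and $\mathbb{C}_f$ be the sets of AXps and CXps for $f_\varphi(\asgn{\pi})=0$. Let $\mathbb{A}$ and $\mathbb{C}$ be the sets of AXps and CXps for $(\pi,\varphi)$.

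First, I will record the structural facts about $\litof{\pi}{\cdot}:2^{\mathcal F}\to 2^{\L{\pi}}$.

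\begin{itemize}
\item It is a bijection with inverse $\featof{\cdot}$, by properties 1 and 2.
\item It preserves inclusion in both directions: $\mathcal S\subseteq\mathcal S'$ iff $\litof{\pi}{\mathcal S}\subseteq\litof{\pi}{\mathcal S'}$. This holds because $\L{\pi}$ contains exactly one literal per variable.
\item It commutes with intersection: $\litof{\pi}{\mathcal S\cap\mathcal S'}=\litof{\pi}{\mathcal S}\cap\litof{\pi}{\mathcal S'}$. In particular, $\mathcal S\cap\mathcal S'\neq\emptyset$ iff $\litof{\pi}{\mathcal S}\cap\litof{\pi}{\mathcal S'}\neq\emptyset$.
\end{itemize}

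By Corollary~\ref{cor:minimal-equivalence}, $\litof{\pi}{\cdot}$ maps $\mathbb{A}_f$ onto $\mathbb{A}$ and $\mathbb{C}_f$ onto $\mathbb{C}$. Surjectivity holds because any $E\in\mathbb{A}$ equals $\litof{\pi}{\featof{E}}$, and the corollary applied to $\mathcal X=\featof{E}$ gives $\featof{E}\in\mathbb{A}_f$.

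Second, I will show that the hitting-set notions transfer. A set $H\subseteq\L{\pi}$ is a hitting set of $\mathbb{C}$ iff $H\cap E\neq\emptyset$ for every $E\in\mathbb{C}$. Writing $H=\litof{\pi}{\mathcal H}$ and $E=\litof{\pi}{\mathcal Y}$ with $\mathcal Y$ ranging over $\mathbb{C}_f$, the intersection property makes this equivalent to $\mathcal H$ being a hitting set of $\mathbb{C}_f$. Minimality also transfers: a proper subset $H'\subset H$ is $\litof{\pi}{\mathcal H'}$ for a proper subset $\mathcal H'\subset\mathcal H$, and conversely, by the bijection and inclusion preservation. Therefore $H$ is a minimal hitting set of $\mathbb{C}$ iff $\mathcal H$ is a minimal hitting set of $\mathbb{C}_f$. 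The same argument holds with the roles of $\mathbb{A}$ and $\mathbb{C}$ exchanged.

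Finally, I combine these with the duality of~\cite{joao}. Take $E\in\mathbb{A}$. Then $\featof{E}\in\mathbb{A}_f$, so $\featof{E}$ is a minimal hitting set of $\mathbb{C}_f$. Hence $E=\litof{\pi}{\featof{E}}$ is a minimal hitting set of $\mathbb{C}$. The converse direction, from CXps to minimal hitting sets of AXps, is symmetric.

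The main obstacle is mild: checking that the intersection and inclusion properties really hold for $\litof{\pi}{\cdot}$. This rests on $\L{\pi}$ being valid and complete, so each feature corresponds to exactly one literal.
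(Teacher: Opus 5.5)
Your proposal is correct and follows the paper's own route. The paper justifies the corollary in one line: the classifier-based duality of~\cite{joao} transfers via Corollary~\ref{cor:minimal-equivalence}, because $\litof{\pi}{\cdot}$ is a bijection that preserves intersection and inclusion. You spell out what the paper leaves implicit, namely that hitting sets and their minimality are preserved under this map (using, as you note, completeness of $\L{\pi}$ so that nonempty feature sets map to nonempty literal sets).
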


\section{Relationship with Temporal Causality}
Meng et al.~\cite{meng2025causal} recently proposed a causality-based framework for explaining finite counterexamples to \LTLf specifications. Their notion of explanation is based on Halpern's causality, that is a cause is a minimal set of interventions that transform a counterexample trace into a model of the specification. In this section we reformulate their framework using our notation and prove its formal relationship with our characterization of AXps and CXps.

Let $\pi$ be a trace, and let $\varphi$ be a formula. A \emph{candidate causal set} is a set $C \subseteq \L{\pi}$. Given $C \subseteq \L{\pi}$, the \emph{alteration trace} $\pi^C$ is a trace obtained by \emph{replacing} every literal of $\pi$ that occurs in $C$ with their opposite. More formally, if $C \subseteq \L{\pi}$, then $\L{\pi^C} = (\L{\pi} \setminus C) \cup \overline{C}$. A \emph{minimal cause for violation} is defined as follows:

\begin{definition} [Minimal cause for violation]
Let $\pi$ be a trace and $\varphi$ a formula with $\pi\not\models\varphi$. A set $C \subseteq \L{\pi}$ is a minimal cause for violation of $\varphi$ in counterexample $\pi$ if $\pi^C \models \varphi$ and no proper subset $C' \subset C$ is such that $\pi^{C'} \models \varphi$.
\end{definition}

Let $Alt_\pi(C) = \{\pi^X: X \subseteq C\}$ denote the set of alteration traces obtained by intervening on $\pi$ over subsets of $C$. The following lemma establishes the correspondence between alteration and compatible traces.

\begin{lemma} \label{lemma:tc}
Let $\pi$ be a trace and let $E \subseteq \L{\pi}$. Then $Alt_\pi(\L{\pi} \setminus E) = \Comp{\pi}{E}$.
\end{lemma}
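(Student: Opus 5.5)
We prove the equality by double inclusion, working with the set-of-literals representation $\L{\cdot}$. Throughout, write $C = \L{\pi}\setminus E$. We use one basic fact: for $X \subseteq \L{\pi}$, the set $\L{\pi^X} = (\L{\pi}\setminus X)\cup\overline{X}$ is again valid and complete, so $\pi^X$ is a well-defined trace of length $|\pi|$.

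\emph{Inclusion $Alt_\pi(C)\subseteq\Comp{\pi}{E}$.} Take $\pi^X$ with $X\subseteq C$. Since $X\cap E=\emptyset$, every literal of $E$ lies in $\L{\pi}\setminus X\subseteq\L{\pi^X}$. We also have $E\subseteq\L{\pi}$ by hypothesis. Hence $E\subseteq\L{\pi}\cap\L{\pi^X}$, and since $|\pi^X|=|\pi|$, we get $\pi^X\in\Comp{\pi}{E}$.

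\emph{Inclusion $\Comp{\pi}{E}\subseteq Alt_\pi(C)$.} Take $\tau\in\Comp{\pi}{E}$ and set $X=\L{\pi}\setminus\L{\tau}$, the literals of $\pi$ on which $\tau$ disagrees. Since $E\subseteq\L{\tau}$, we have $X\subseteq\L{\pi}\setminus E=C$. It remains to show $\L{\pi^X}=\L{\tau}$, which then gives $\tau=\pi^X$ because a trace is determined by its literal set.
\begin{itemize}
\item By definition of $X$, the literals of $\L{\pi}\setminus X$ are exactly those in $\L{\pi}\cap\L{\tau}$, so they belong to $\L{\tau}$.
\item For $\ell\in X$, we have $\ell\notin\L{\tau}$. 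Since $\L{\tau}$ is complete, $\overline{\ell}\in\L{\tau}$.
\end{itemize}
Hence $\L{\pi^X}\subseteq\L{\tau}$. Both sets are valid and complete over the same set $\Vars$, so each contains exactly one literal per atom, and inclusion forces equality.

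The only delicate point is this last step: the argument that $\tau$ is recovered exactly as an alteration trace. It rests on completeness and validity of $\L{\cdot}$, and on the fact that traces in $\Comp{\pi}{E}$ have length $|\pi|$. Once these are noted, the rest is routine set manipulation.
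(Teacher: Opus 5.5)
Your proof is correct and follows the paper's argument: double inclusion, with the same witness $X = \L{\pi}\setminus\L{\tau}$ for the reverse direction. The only difference is that you justify the step $\L{\pi^X}=\L{\tau}$ explicitly, using validity and completeness of literal sets, whereas the paper asserts it in one line.
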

\begin{appendixonly}
\begin{proof} %
$(\rightarrow)$ Let $\tau \in Alt_\pi(\L{\pi} \setminus E)$. Then there exists $X \subseteq \L{\pi} \setminus E$ such that $\tau = \pi^X$. By definition, $\L{\tau} = \L{\pi^X} = (\L{\pi} \setminus X) \cup \overline{X}$. Since $X\cap E=\emptyset$, every literal in $E$ is left unchanged. Hence, $E \subseteq \L{\pi}\cap \L{\tau}$ that implies $\tau \in \Comp{\pi}{E}$.
$(\leftarrow)$ Let $\tau \in \Comp{\pi}{E}$. Consider the set $X = \L{\pi} \setminus \L{\tau}$. Since $E\subseteq \L{\pi}\cap\L{\tau}$ we have that $X\cap E=\emptyset$. Therefore, $X \subseteq \L{\pi} \setminus E$. Moreover, $\tau = \pi^X$ because the intervention flips exactly the literals on which $\pi$ and $\tau$ differ. Hence, $\tau \in Alt_\pi(\L{\pi} \setminus E)$.
\end{proof}
\end{appendixonly}

Now we are ready to establish the relationships. 

\begin{proposition}\label{prop:causes}
    A set $C\subseteq \L{\pi}$ is a minimal cause for the violation of $\varphi$ if and only if $C$ is a CXp for $(\pi,\varphi)$.
\end{proposition}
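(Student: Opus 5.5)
The plan is to reduce both notions to one statement about alteration traces, using Lemma~\ref{lemma:tc}. First I will show that a set $E \subseteq \L{\pi}$ is a WCXp for $(\pi,\varphi)$ if and only if there is some $X \subseteq E$ with $\pi^X \models \varphi$. This follows from Definition~\ref{def:wcxp} once we set $E' = \L{\pi}\setminus E$ in Lemma~\ref{lemma:tc}. Then $\L{\pi}\setminus E' = E$, so $\Comp{\pi}{\L{\pi}\setminus E} = Alt_\pi(E) = \{\pi^X : X \subseteq E\}$. Hence $\Comp{\pi}{\L{\pi}\setminus E}\cap\Mod{\varphi}\neq\emptyset$ holds exactly when some $\pi^X$ with $X\subseteq E$ is a model of $\varphi$. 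The hypothesis $\pi\not\models\varphi$ is shared by both definitions. A useful consequence is that being a WCXp is upward closed: whenever $\pi^X\models\varphi$, every superset of $X$ inside $\L{\pi}$ is a WCXp.

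For the direction ($\Rightarrow$), let $C$ be a minimal cause. Since $\pi^C\models\varphi$, taking $X=C$ in the characterization shows that $C$ is a WCXp. For minimality, suppose some $C'\subset C$ were a WCXp. The characterization gives $X\subseteq C'\subset C$ with $\pi^X\models\varphi$. Then $X$ is a proper subset of $C$ whose alteration satisfies $\varphi$, which contradicts the minimality of $C$ as a cause. So $C$ is a CXp.

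For the direction ($\Leftarrow$), let $C$ be a CXp. Since $C$ is a WCXp, there is $X\subseteq C$ with $\pi^X\models\varphi$. By the characterization, $X$ is itself a WCXp (take $X$ as its own witness). Minimality of $C$ then forces $X=C$, so $\pi^C\models\varphi$. This is the step I expect to need the most care, because a WCXp only guarantees a satisfying alteration on \emph{some} subset, not on $C$ itself. The fix is exactly this observation that the witness subset is a WCXp. For minimality as a cause, any $C'\subset C$ with $\pi^{C'}\models\varphi$ would be a WCXp by the characterization, contradicting that $C$ is a CXp. Hence $C$ is a minimal cause for the violation of $\varphi$.
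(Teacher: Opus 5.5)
Your proof is correct and follows essentially the same route as the paper. Both use Lemma~\ref{lemma:tc} to turn the WCXp condition into the existence of some $X\subseteq C$ with $\pi^X\models\varphi$, and your $(\Rightarrow)$ direction is identical to the paper's. In $(\Leftarrow)$ you are slightly more direct: the paper picks a subset-minimal witness $X$, shows it is a minimal cause, and invokes $(\Rightarrow)$ to get $X=C$, whereas you note that any witness $X$ is already a WCXp, so minimality of $C$ gives $X=C$ immediately.
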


\begin{appendixonly}
\begin{proof} %
    $(\rightarrow)$ Let $C$ be a minimal cause for violation of the formula $\varphi$ in the counterexample trace $\pi$. By definition, $\pi^C \models \varphi$.
    By Lemma~\ref{lemma:tc}, $Alt_\pi(C) = \Comp{\pi}{\L{\pi} \setminus C}$. Since $\pi^C \in Alt_\pi(C)$ and $\pi^C \models \varphi$, we have that $\Comp{\pi}{\L{\pi} \setminus C}\cap \Mod{\varphi}\neq\emptyset$. Hence, $C$ is a WCXp. Suppose that $C$ is not a CXp, then there exists a $C'\subset C$ that is also a WCXp. By Lemma~\ref{lemma:tc} $Alt_\pi(C') = \Comp{\pi}{\L{\pi}\setminus C'}$, hence there exists $X\subseteq C'$ such that $\pi^{X}\models \varphi$. Since $X\subseteq C'\subset C$, this contradicts the minimality of $C$.
    $(\leftarrow)$ Let $C$ be a CXp (hence also a WCXp), so $\Comp{\pi}{\L{\pi}\setminus C}\cap\Mod{\varphi}\neq\emptyset$. By Lemma~\ref{lemma:tc}, $\Comp{\pi}{\L{\pi} \setminus C} = Alt_\pi(C)$, hence there exists $X \subseteq C$ with $\pi^X \models \varphi$. Take such an $X$ that is subset-minimal (it exists, as $\L{\pi}$ is finite). Every $X' \subset X$ also satisfies $X' \subseteq C$, hence $\pi^{X'} \not\models \varphi$ by minimality of $X$; therefore $X$ is a minimal cause for violation. By direction $(\rightarrow)$, $X$ is a CXp; since $X \subseteq C$ and $C$ is a CXp, subset-minimality gives $X = C$. Therefore, $C$ is a minimal cause.
\end{proof}
\end{appendixonly}

Unlike \citeauthor{meng2025causal}, our approach makes AXp/CXp duality explicit and enables efficient cause computation.

\begin{table*}[ht!]

\centering
\caption{Per-instance AXp/CXp counts (top) and AXp/CXp explanation sizes in literals (bottom), by log. $N$: instances (C=complete, T=timeout). Column $Px$ refers to the $x$-th quantile of the distributions, with P50 being the median. Max., min. refers to the maximum and minimum observed statistics.}
\label{tab:results}
\resizebox{\textwidth}{!}{%
\begin{tabular}{lcccccccccccccccccccccc}
\toprule
& \multicolumn{2}{c}{$N$} & \multicolumn{10}{c}{AXp} & \multicolumn{10}{c}{CXp} \\
\cmidrule(lr){2-3} \cmidrule(lr){4-13} \cmidrule(lr){14-23}
& & & \multicolumn{5}{c}{Complete} & \multicolumn{5}{c}{Timeout} & \multicolumn{5}{c}{Complete} & \multicolumn{5}{c}{Timeout} \\
\cmidrule(lr){4-8} \cmidrule(lr){9-13} \cmidrule(lr){14-18} \cmidrule(lr){19-23}
Log & C & T & Min & P25 & P50 & P75 & Max & Min & P25 & P50 & P75 & Max & Min & P25 & P50 & P75 & Max & Min & P25 & P50 & P75 & Max \\
\midrule
\multicolumn{23}{l}{\textit{Per-instance counts}} \\
\midrule
bpi2020-domdecl & 81 & 8 & 1 & 10 & 21 & 82 & 441 & 567 & 578 & 588 & 647 & 834 & 1 & 1 & 3 & 5 & 19 & 18 & 23 & 28 & 35 & 70 \\
bpi2020-intdecl & 181 & 552 & 6 & 54 & 83 & 193 & 621 & 81 & 186 & 249 & 344 & 784 & 4 & 19 & 38 & 62 & 220 & 0 & 9 & 38 & 91 & 908 \\
bpi2020-permlog & 312 & 1166 & 4 & 40 & 177 & 443 & 1087 & 80 & 294 & 426 & 597 & 1253 & 2 & 15 & 51 & 143 & 366 & 0 & 9 & 70 & 223 & 885 \\
bpi2020-prepcost & 180 & 18 & 1 & 16 & 43 & 155 & 797 & 323 & 432 & 508 & 591 & 754 & 1 & 2 & 4 & 7 & 26 & 6 & 14 & 17 & 21 & 36 \\
bpi2020-reqpay & 79 & 1 & 1 & 6 & 12 & 62 & 937 & 829 & 829 & 829 & 829 & 829 & 1 & 1 & 2 & 4 & 26 & 88 & 88 & 88 & 88 & 88 \\
sepsis & 220 & 624 & 1 & 13 & 24 & 110 & 618 & 43 & 216 & 326 & 435 & 931 & 2 & 18 & 50 & 121 & 415 & 0 & 26 & 96 & 192 & 725 \\
\midrule
\multicolumn{23}{l}{\textit{Explanation sizes}} \\
\midrule
bpi2020-domdecl & 81 & 8 & 1 & 3 & 4 & 4 & 5 & 1 & 4 & 4 & 5 & 6 & 1 & 10 & 14 & 16 & 24 & 18 & 22 & 24 & 30 & 38 \\
bpi2020-intdecl & 181 & 552 & 1 & 4 & 6 & 7 & 10 & 1 & 7 & 8 & 10 & 29 & 2 & 10 & 14 & 16 & 22 & 6 & 14 & 18 & 20 & 40 \\
bpi2020-permlog & 312 & 1166 & 1 & 5 & 6 & 7 & 11 & 1 & 5 & 6 & 8 & 55 & 2 & 18 & 20 & 20 & 24 & 16 & 24 & 26 & 30 & 144 \\
bpi2020-prepcost & 180 & 18 & 1 & 3 & 4 & 4 & 6 & 1 & 4 & 4 & 4 & 6 & 1 & 14 & 18 & 20 & 30 & 20 & 24 & 24 & 26 & 36 \\
bpi2020-reqpay & 79 & 1 & 1 & 3 & 4 & 4 & 6 & 1 & 4 & 4 & 4 & 7 & 1 & 10 & 16 & 20 & 24 & 24 & 26 & 26 & 28 & 30 \\
sepsis & 220 & 624 & 1 & 5 & 6 & 7 & 11 & 1 & 7 & 9 & 10 & 54 & 1 & 10 & 12 & 14 & 22 & 4 & 14 & 16 & 18 & 50 \\
\bottomrule
\end{tabular}
}

\end{table*}

\section{Complexity}
We analyze the complexity of deciding whether a given set of temporal literals is a WAXp or WCXp. Trace length is fixed as part of the explanation setting and the decision problems amount to searching for models of \LTLf formulas over traces of bounded length. This reduces the complexity of satisfiability from PSPACE-complete to NP-complete.

\begin{proposition}\label{prop:wcxp-np}
Checking whether $E \subseteq \L{\pi}$ is a WCXp for an explanation problem $(\pi,\varphi)$ is NP-complete.
\end{proposition}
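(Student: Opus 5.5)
Membership in NP and NP-hardness are proved separately. In both parts the trace length $n=|\pi|$ is treated as part of the input, since $\pi$ is given explicitly.

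\emph{Membership.} The certificate is a trace $\tau\in(2^\A)^n$, of size $n\cdot|\A|$ and hence polynomial in the input. The verifier performs three polynomial-time checks.
\begin{enumerate}
\item It checks $\pi\not\models\varphi$ and $\tau\models\varphi$. \LTLf model checking of a single finite trace is polynomial: fill a table indexed by subformulas of $\varphi$ and positions $0,\dots,n-1$ bottom-up. The until case uses the recurrence $\pi,i\models\psi_1\LTLUntil\psi_2$ iff $\pi,i\models\psi_2$, or $\pi,i\models\psi_1$ and $i+1<n$ and $\pi,i+1\models\psi_1\LTLUntil\psi_2$. This costs $O(|\varphi|\cdot n)$.
\item It checks $\L{\pi}\setminus E\subseteq\L{\tau}$, which is a direct set comparison. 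This condition is exactly $\tau\in\Comp{\pi}{\L{\pi}\setminus E}$.
\end{enumerate}
Together these witness $\Comp{\pi}{\L{\pi}\setminus E}\cap\Mod{\varphi}\neq\emptyset$. Equivalently, by Lemma~\ref{lemma:tc}, the certificate is a set $X\subseteq E$ with $\pi^X\models\varphi$.

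\emph{Hardness.} I reduce from propositional SAT. Given a CNF (or any propositional formula) $\psi$ over variables $p_1,\dots,p_k$, set $\A=\{p_1,\dots,p_k\}$ and $n=1$. Read $\psi$ as an \LTLf formula evaluated at position $0$; it contains no temporal operators. Let $\pi$ be a single state on which the verdict is a violation. The simplest choice is $\varphi=\psi\land\lnot(\bigwedge_j \lnot p_j)$ with $\pi=(\emptyset)$, so that $\pi\not\models\varphi$ by construction. Take $E=\L{\pi}$, i.e.\ unlock everything. Then $\Comp{\pi}{\emptyset}$ consists of all one-state traces, and $E$ is a WCXp iff $\varphi$ has a model.

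The remaining point is that $\varphi$ is satisfiable iff $\psi$ is satisfiable. This needs a small fix-up: it can fail exactly when the all-false assignment is the only model of $\psi$. I would avoid that case cleanly with a fresh atom $q$. Set $\varphi=\psi\land q$ and $\pi=(\emptyset)$. Then $\pi\not\models\varphi$ always holds because $q\notin\pi(0)$, and $\varphi$ is satisfiable iff $\psi$ is. The reduction is clearly polynomial.

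I expect the only delicate step to be this one: guaranteeing the precondition $\pi\not\models\varphi$ without solving SAT. The fresh-atom trick handles it. Everything else is routine bookkeeping with Lemma~\ref{lemma:tc} and linear-time evaluation of \LTLf over a fixed finite trace.
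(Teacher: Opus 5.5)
Your proposal is correct and follows the same route as the paper: for membership, a length-$|\pi|$ trace certificate checked for compatibility and model-checked in polynomial time; for hardness, the same reduction from SAT with a fresh atom $q$, $\varphi=\psi\land q$, $\pi(0)=\emptyset$ and $E=\L{\pi}$. Your explicit check of $\pi\not\models\varphi$ in the verifier, which the paper leaves implicit, is a harmless addition.
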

\begin{appendixonly}
\begin{proof}[Proof (Sketch)] %

[Membership.] A certificate is a trace $\tau\in (2^\A)^{|\pi|}$. One can check in polynomial time that both $\tau \in \Comp{\pi}{\L{\pi}\setminus E}$ and $\tau \models \varphi$. If both conditions hold, then $\Comp{\pi}{\L{\pi}\setminus E}\cap\Mod{\varphi}\neq\emptyset$ and  $E$ is a WCXp.

[Hardness.] Hardness is shown by a reduction from the Boolean satisfiability problem~\cite{garey1979computers}. Let $\phi$ be a CNF formula over the variables $X=\{x_1,...,x_n\}$,%
and let $q\not\in X$ be a fresh variable. Set $\A=X\cup\{q\}$ and $|\pi|=1$. We map $\phi$ to the instance $(\pi,\varphi,E)$ where:
$$\varphi = q\land \phi,
\qquad
\pi(0)=\emptyset,
\qquad
E=\L{\pi}.$$

Note that $\L{\pi}\setminus E=\emptyset$, so $\Comp{\pi}{\emptyset}$ contains every trace of length $1$ over $\A$; hence $E$ is a WCXp iff some length-$1$ trace satisfies $\varphi$. The conjunct $q$ ensures $\pi\not\models\varphi$, so $(\pi,\varphi)$ is a well-formed explanation problem.

($\rightarrow$) Let $M$ satisfy $\phi$, and set $\tau(0)=\{x_i\mid M(x_i)=\text{true}\}\cup\{q\}$. Then $\tau\models q\land\phi$, so $\Comp{\pi}{\L{\pi}\setminus  E}\cap\Mod{\varphi}\neq\emptyset$ and $E$ is a WCXp.
($\leftarrow$) Let $E$ be a WCXp, witnessed by $\tau\in\Comp{\pi}{\L{\pi}\setminus E}\cap\Mod{\varphi}$. Since $\varphi=q\land\phi$, we have $\tau\models\phi$; the assignment $M(x_i)=\text{true}$ iff $x_i\in\tau(0)$ satisfies $\phi$.
\end{proof}
\end{appendixonly}

The complexity of WAXp checking follows by duality. $E\subseteq\L{\pi}$ is a WAXp for $(\pi,\varphi)$ if and only if no trace in $\Comp{\pi}{E}$ satisfies $\varphi$, that is, $\L{\pi}\setminus E$ is \emph{not} a WCXp for $(\pi,\varphi)$. WAXp checking is thus the complement of WCXp checking, and Proposition~\ref{prop:wcxp-np} yields the following.

\begin{corollary}\label{prop:waxp-conp}
 Checking whether $E \subseteq \L{\pi}$ is a WAXp for an explanation problem $(\pi,\varphi)$ is coNP-complete.    
\end{corollary}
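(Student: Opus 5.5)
The plan is to obtain coNP membership and coNP-hardness separately, in both cases by passing through the complementation already observed just before the statement. Concretely, $E\subseteq\L{\pi}$ is a WAXp for $(\pi,\varphi)$ iff $\pi\not\models\varphi$ and $\Comp{\pi}{E}\cap\Mod{\varphi}=\emptyset$. Since $\L{\pi}\setminus(\L{\pi}\setminus E)=E$, the latter holds iff $\L{\pi}\setminus E$ is not a WCXp for $(\pi,\varphi)$. The complement map $E\mapsto \L{\pi}\setminus E$ is computable in polynomial time (linear in $|\A|\cdot n$), so it serves as the reduction in both directions.

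\textbf{Membership.} We show the complement of WAXp checking is in NP. An instance $(\pi,\varphi,E)$ fails to be a WAXp iff either $\pi\models\varphi$, or some trace $\tau\in\Comp{\pi}{E}$ satisfies $\varphi$. The first condition is checkable in polynomial time by standard bottom-up evaluation of $\varphi$ on the finite trace $\pi$, computing the truth of every subformula at every position in $O(|\varphi|\cdot|\pi|)$. For the second, the certificate is $\tau\in(2^\A)^{|\pi|}$, and the checks $E\subseteq\L{\tau}$ and $\tau\models\varphi$ are likewise polynomial. This mirrors the membership argument of Proposition~\ref{prop:wcxp-np} with $\L{\pi}\setminus E$ replaced by $E$.

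\textbf{Hardness.} We reduce UNSAT, which is coNP-complete, reusing the construction from the proof of Proposition~\ref{prop:wcxp-np}. Given a CNF $\phi$ over $X$, take $\A=X\cup\{q\}$, $|\pi|=1$, $\pi(0)=\emptyset$, and $\varphi=q\land\phi$, but now set $E=\emptyset$. The conjunct $q$ guarantees $\pi\not\models\varphi$, so the instance is a well-formed explanation problem. Moreover $\Comp{\pi}{\emptyset}$ contains every length-$1$ trace. Hence $E$ is a WAXp iff no length-$1$ trace satisfies $q\land\phi$. Because $q$ is fresh and can be freely set true, this holds iff $\phi$ is unsatisfiable. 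Equivalently, $\emptyset$ is a WAXp iff $\L{\pi}$ is not a WCXp, so hardness follows directly from the reduction in Proposition~\ref{prop:wcxp-np}.

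The only delicate point is the side condition $\pi\not\models\varphi$ in the definition. Because of it, WAXp checking is not literally the set-complement of WCXp checking over all inputs. In membership it is handled by the polynomial model-check, and in hardness by the fresh atom $q$. No other step should pose an obstacle.
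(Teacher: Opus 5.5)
Your proposal is correct and follows the paper's route: WAXp checking is the complement of WCXp checking under the polynomial map $E\mapsto\L{\pi}\setminus E$, so both membership and hardness come from Proposition~\ref{prop:wcxp-np}, and your hardness instance $E=\emptyset$ is just the complement of the instance $E=\L{\pi}$ used there. You are slightly more careful than the paper: it calls the two problems complements without addressing the side condition $\pi\not\models\varphi$, which you handle with the polynomial model check for membership and the fresh atom $q$ for hardness.
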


Given the coNP (resp.\ NP) complexity of checking whether a set is a WAXp (resp.\ WCXp), a deletion-based procedure that removes literals one by one computes an AXp (resp.\ CXp) with a polynomial (linear) number of oracle calls, which matches general results for AXps/CXps~\cite{DBLP:conf/aaai/0001I22}.

Next, we show that checking whether a set $E$ is an AXp (CXp) is DP-complete (coNP-complete) by reduction from MUS-checking (UNSAT) of a propositional formula.
Let $\phi$ be an unsatisfiable propositional formula in 3-CNF with clauses $c_1,\dots,c_m$, and let $R=\{r_1,\dots,r_m\}$ be fresh propositional symbols, one per clause. Set $c'_i = r_i \lor c_i$ and $\phi' = \bigwedge_{i=1}^{m} c'_i$. Over the alphabet $\A = \mathrm{vars}(\phi)\cup R$, the formula $\phi'$ is an \LTLf formula and $\pi$, with $\pi(0)=\emptyset$, is a trace of length one; since $\phi$ is unsatisfiable, $\pi\not\models\phi'$. Keeping $r_i$ false forces $c'_i$ to be satisfied through $c_i$; accordingly, we identify a set of clauses $S\subseteq\{c_1,\dots,c_m\}$ with the temporal literals $\lambda(S) = \{\lnot r_i : c_i\in S\}\subseteq\L{\pi}$. This yields the key correspondence: for every $S\subseteq\{c_1,\dots,c_m\}$,
$$
\Comp{\pi}{\lambda(S)}\cap\Mod{\phi'}\neq\emptyset
\iff
\textstyle\bigwedge_{c_i\in S} c_i \text{ is satisfiable.}(\ast)
$$
Recall that $S\subseteq\phi$ is an MUS if $S$ is unsatisfiable and every proper subset is satisfiable; $S$ is a \emph{maximal satisfiable subset} (MSS) if $S$ is satisfiable and $S\cup\{c\}$ is unsatisfiable for every $c\in\phi\setminus S$; and $S$ is an MCS if $\phi\setminus S$ is an MSS, i.e., a minimal set of clauses whose removal restores satisfiability.

\begin{proposition}\label{prop:mus-axp}
$S$ is an MUS of $\phi$ if and only if $\lambda(S)$ is an AXp for
$(\pi,\phi')$.
\end{proposition}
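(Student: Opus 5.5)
The plan is to reduce everything to the correspondence $(\ast)$. First I prove $(\ast)$ directly. Then I show that weak explanations correspond to unsatisfiable clause sets. Finally I transfer subset-minimality through $\lambda$.

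\emph{Step 1: proving $(\ast)$.} Since $|\pi|=1$ and $\pi(0)=\emptyset$, a trace $\tau$ lies in $\Comp{\pi}{\lambda(S)}$ exactly when $r_i\notin\tau(0)$ for every $c_i\in S$. The variables of $\phi$ and the remaining $r_j$ are unconstrained, because $\lambda(S)$ contains only literals over $R$.

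For the direction from right to left, let $M$ satisfy $\bigwedge_{c_i\in S}c_i$. Define $\tau(0)=\{x: M(x)=\text{true}\}\cup\{r_j: c_j\notin S\}$. Then $\tau\in\Comp{\pi}{\lambda(S)}$. Each $c'_j$ with $c_j\notin S$ holds via $r_j$, and each $c'_i$ with $c_i\in S$ holds via $c_i$ under $M$. Hence $\tau\models\phi'$.

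For the converse, take $\tau\in\Comp{\pi}{\lambda(S)}\cap\Mod{\phi'}$. For $c_i\in S$ we have $r_i\notin\tau(0)$, so $\tau\models c'_i$ forces $\tau\models c_i$. Reading off the assignment $x\mapsto[x\in\tau(0)]$ therefore gives a model of $\bigwedge_{c_i\in S}c_i$.

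\emph{Step 2: weak explanations.} The problem is well formed, since $\pi\not\models\phi'$ as noted in the text. By Definition~\ref{def:waxp} and $(\ast)$, $\lambda(S)$ is a WAXp for $(\pi,\phi')$ if and only if $\bigwedge_{c_i\in S}c_i$ is unsatisfiable, i.e.\ $S$ is an unsatisfiable subset of $\phi$.

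\emph{Step 3: minimality.} The map $\lambda$ is injective and inclusion-preserving. Moreover, every subset $E'\subseteq\lambda(S)$ has the form $E'=\lambda(S')$ with $S'=\{c_i: \lnot r_i\in E'\}\subseteq S$, and $E'\subset\lambda(S)$ iff $S'\subset S$. So the proper subsets of $\lambda(S)$ are exactly the images of the proper subsets of $S$.

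Combining this with Step 2: $\lambda(S)$ is a WAXp with no proper sub-WAXp if and only if $S$ is unsatisfiable and every proper subset of $S$ is satisfiable. That is precisely the statement that $S$ is an MUS.

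This last step is the point needing care. Minimality in the AXp definition ranges over all subsets of $\lambda(S)$ inside $\L{\pi}$, including literals $\lnot x$ over the variables of $\phi$. However, such literals never occur in subsets of $\lambda(S)$, so only subsets of the form $\lambda(S')$ matter. No claim about AXps outside the image of $\lambda$ is needed.

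The main technical obstacle is Step 1. There one must check that the free $r_j$ for $c_j\notin S$ can be set true to discharge the clauses outside $S$. One must also check that the literals $\lnot x\in\L{\pi}$ are genuinely unlocked, so that an arbitrary model of $S$ can be realized by some $\tau$.
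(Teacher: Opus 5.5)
Your proof is correct and follows the paper's argument. The WAXp property of $\lambda(S)$ is read off from $(\ast)$, and minimality transfers because the proper subsets of $\lambda(S)$ are exactly the sets $\lambda(S')$ with $S'\subset S$; the only addition is that you also prove $(\ast)$, by setting $r_j$ true for clauses outside $S$ and reading a model of $S$ off $\tau(0)$, whereas the paper states $(\ast)$ without proof.
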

\begin{appendixonly}
\begin{proof} %
$(\rightarrow)$ Let $S$ be an MUS. By $(\ast)$, $S$ unsatisfiable gives $\Comp{\pi}{\lambda(S)}\cap\Mod{\phi'}=\emptyset$, so $\lambda(S)$ is a WAXp; and for every $S'\subset S$, $S'$ satisfiable gives, again by $(\ast)$, that $\lambda(S')$ is not a WAXp. These $\lambda(S')$ are exactly the proper subsets of $\lambda(S)$; hence $\lambda(S)$ is an AXp.
$(\leftarrow)$ Symmetric, reading each implication of $(\ast)$ in the
opposite direction.
\end{proof}
\end{appendixonly}

\begin{proposition}\label{prop:axp-dp}
Checking whether $E\subseteq\L{\pi}$ is an AXp for $(\pi,\varphi)$ is
DP-complete.
\end{proposition}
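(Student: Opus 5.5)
Membership in DP and hardness are handled separately, with hardness obtained from the MUS encoding set up just before the statement.

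\textbf{Membership.} $E$ is an AXp exactly when two conditions hold. First, $E$ is a WAXp, which is a coNP check by Corollary~\ref{prop:waxp-conp}. Second, for every $\ell\in E$, the set $E\setminus\{\ell\}$ is \emph{not} a WAXp. Minimality only needs these single-literal deletions because WAXp is monotone: if $E'\subseteq E''$ then $\Comp{\pi}{E''}\subseteq\Comp{\pi}{E'}$, so every superset of a WAXp is a WAXp. Hence if some proper subset $E'\subset E$ were a WAXp, so would be $E\setminus\{\ell\}$ for any $\ell\in E\setminus E'$. The second condition asks that, for each $\ell$, some trace in $\Comp{\pi}{E\setminus\{\ell\}}\cap\Mod{\varphi}$ exists. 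We guess the $|E|$ traces at once; each is a polynomial-size certificate and each membership and satisfaction test is polynomial, as in the proof of Proposition~\ref{prop:wcxp-np}. So the second condition is in NP. The problem is the intersection of a coNP language and an NP language, i.e.\ it lies in DP.

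\textbf{Hardness.} We reduce from MUS-checking for 3-CNF, which is DP-complete (Papadimitriou--Wolfe). Given $\phi$ and $S\subseteq\phi$, output $(\pi,\phi',\lambda(S))$ as in the construction above; this takes polynomial time. Proposition~\ref{prop:mus-axp} gives directly that $S$ is an MUS of $\phi$ iff $\lambda(S)$ is an AXp for $(\pi,\phi')$.

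The main obstacle is a side condition of that construction: it assumes $\phi$ is unsatisfiable, so that $\pi\not\models\phi'$ and $(\pi,\phi')$ is a well-formed explanation problem. The MUS-checking instance, however, does not promise this. We resolve it as follows.

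\begin{itemize}
\item First we check that $\pi\not\models\phi'$ holds regardless of $\phi$. Since $\pi(0)=\emptyset$, every $r_i$ is false, so $\pi\models\phi'$ iff the all-false assignment satisfies $\phi$. We can test this in polynomial time.
\item If the all-false assignment does satisfy $\phi$, then $\phi$ is satisfiable, so no $S\subseteq\phi$ is an MUS. We map the input to a fixed negative AXp instance.
\item Otherwise $\pi\not\models\phi'$, and $(\ast)$ together with the proof of Proposition~\ref{prop:mus-axp} use only the definitions, not the unsatisfiability of $\phi$.
\end{itemize}

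For the fixed negative instance, take $\varphi=a\land\lnot a$ over a single atom $a$, with $\pi(0)=\emptyset$, so $\L{\pi}=\{\lnot a_0\}$ and $\pi\not\models\varphi$. Take $E=\{\lnot a_0\}$. It is a WAXp, since $\varphi$ has no models. It is not an AXp, because $\emptyset$ is already a WAXp. The reduction is therefore correct in both cases, which establishes DP-hardness.
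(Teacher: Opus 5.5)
Your proof is correct and follows the paper's argument: membership comes from splitting the problem into the coNP condition ``$E$ is a WAXp'' and the NP condition ``no $E\setminus\{\ell\}$ is a WAXp,'' and hardness comes from MUS-checking via Proposition~\ref{prop:mus-axp}. Your additions fill in details the paper leaves implicit: the monotonicity argument that justifies checking only single-literal deletions, guessing one witness trace per $\ell\in E$ (the paper's wording ``for some $e\in E$'' understates the certificate that is actually needed), and discharging the assumption that $\phi$ is unsatisfiable via the all-false-assignment test.
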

\begin{appendixonly}
\begin{proof}%
\emph{Membership.} $E$ is an AXp iff (a) $E$ is a WAXp and (b) no $E\setminus\{e\}$ is a WAXp. Condition (a) is in coNP (Corollary~\ref{prop:waxp-conp}); (b) is in NP, as a certificate is, for some $e\in E$, a trace in $\Comp{\pi}{E\setminus\{e\}}\cap\Mod{\varphi}$. Hence the problem is in DP.
\emph{Hardness.} By Proposition~\ref{prop:mus-axp}, since MUS-checking is DP-hard (MINIMAL UNSATISFIABILITY;~\cite{DBLP:journals/jcss/PapadimitriouW88,DBLP:journals/ai/Marques-SilvaJM17}).
\end{proof}
\end{appendixonly}

Checking a CXp, on the other hand, is coNP-complete:

\begin{proposition}
Checking whether $C\subseteq\L{\pi}$ is a CXp for $(\pi,\varphi)$ is
coNP-complete.
\end{proposition}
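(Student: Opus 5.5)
Membership: $C$ is a CXp iff (a) $C$ is a WCXp and (b) no $C\setminus\{c\}$ is a WCXp. The WCXp property is monotone: if $C'\subseteq C''$ then $\L{\pi}\setminus C''\subseteq\L{\pi}\setminus C'$, so $\Comp{\pi}{\L{\pi}\setminus C'}\subseteq\Comp{\pi}{\L{\pi}\setminus C''}$. Hence condition (b) already forces every proper subset of $C$ to fail, and it suffices to check the $|C|$ maximal ones.

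Condition (a) alone is in NP by Proposition~\ref{prop:wcxp-np}, which would give only DP membership. The plan is to show that (a) follows from (b), leaving a purely coNP check, whenever $C\neq\emptyset$. Take $c\in C$ and suppose $C\setminus\{c\}$ is not a WCXp. By Lemma~\ref{lemma:tc}, no $\pi^X$ with $X\subseteq C\setminus\{c\}$ satisfies $\varphi$. This by itself does not make $C$ a WCXp, which is why I expect this step to be the main obstacle.

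First I would re-read the intended notion. Since $\pi\not\models\varphi$, the empty set is never a WCXp, and the natural coNP reading is ``$C$ is a WCXp and all $C\setminus\{c\}$ are not''. The NP part then seems unavoidable in general. I would therefore aim to prove that in this reduction setting the NP condition can be folded in, for example because the input is promised to be a WCXp. Alternatively, I would accept the standard FXAI convention that CXp checking is assumed to take a WCXp as input. I would state this explicitly and then show: given that $C$ is a WCXp, checking that every $C\setminus\{c\}$ is not a WCXp is a conjunction of $|C|$ coNP checks, by Corollary~\ref{prop:waxp-conp} applied to the complements $\L{\pi}\setminus(C\setminus\{c\})$. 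This gives coNP membership.

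Hardness: reduce from UNSAT using the $\phi'$ construction. The dual of Proposition~\ref{prop:mus-axp} is that $S$ is an MCS of $\phi$ iff $\lambda(S)$ is a CXp, which follows from $(\ast)$ and Lemma~\ref{lemma:tc}. For a CNF $\psi$, I would add a fresh clause $c_0$ consisting of a fresh variable $z$ together with a clause $\lnot z$, and arrange that $\{\lnot r_i\}$ for a single designated clause is a CXp iff $\psi$ is unsatisfiable. A simpler route is to reuse the construction of Proposition~\ref{prop:wcxp-np} with $q$: take $\varphi=q\lor\psi$ (with $q$ fresh, $\pi(0)=\emptyset$, and $\psi$ falsified by $\emptyset$, e.g.\ by adding a clause $y$ with a fresh $y$) and $C=\{\lnot q_0\}$. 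Then $C$ is a WCXp, since setting $q$ satisfies $\varphi$, and it is minimal iff the empty set is not a WCXp, which holds always because $\pi\not\models\varphi$. Since this makes the check trivial, I would instead use $\varphi=q\lor\psi$ with $C=\{\lnot q_0\}\cup\L{\pi}|_{\mathrm{vars}(\psi)}$. The subsets dropping $\lnot q_0$ are WCXps iff $\psi$ is satisfiable, while the subsets dropping a $\psi$-variable literal remain WCXps via $q$. So $C$ is a CXp iff $\psi$ is unsatisfiable and $C$ has size one, so I would adjust by letting $\varphi=q\lor\psi$ and $C=\{\lnot q_0\}$ together with the comparison against unlocking all of $\mathrm{vars}(\psi)$. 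Getting this gadget exactly right is the remaining detail to settle.
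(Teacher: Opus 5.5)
Both halves of your proposal stop short of a proof. In each half the missing ingredient is the same structural fact: a CXp is always witnessed by flipping \emph{all} of its literals.

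\textbf{Membership.} You end by assuming the input is promised to be a WCXp. That changes the problem, and your claim that the NP part is unavoidable is false. Trying to derive (a) from (b) is the wrong target, since (a) indeed does not follow from (b). What is true is that, in the presence of (b), (a) collapses to a polynomial-time check.

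Suppose $C$ is a WCXp. Lemma~\ref{lemma:tc} gives some $X\subseteq C$ with $\pi^{X}\models\varphi$. If $X\subsetneq C$, then $X\subseteq C\setminus\{c\}$ for some $c\in C$. By the monotonicity you already proved, $C\setminus\{c\}$ would then be a WCXp, contradicting (b).

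Hence $C$ is a CXp iff two conditions hold:
\emph{(i)} $\pi^{C}\models\varphi$, a single model check on a fixed trace;
\emph{(ii)} no $C\setminus\{c\}$ is a WCXp, a conjunction of $|C|$ coNP checks.
This is exactly the characterization from Proposition~\ref{prop:causes} that the paper uses: $\pi^{C}\models\varphi$ and $\pi^{C'}\not\models\varphi$ for all $C'\subset C$, i.e.\ P plus coNP.

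\textbf{Hardness.} You do not have a working reduction. The claimed MCS analogue of Proposition~\ref{prop:mus-axp} is wrong as stated. The set $\L{\pi}\setminus\lambda(S)$ also locks every variable of $\phi$ to false. So $\lambda(S)$ is a WCXp iff the all-false assignment satisfies every clause outside $S$, not iff those clauses are satisfiable.

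Your $q\lor\psi$ gadget fails for the reason you found yourself: flipping $q$ alone already satisfies the formula, so any larger $C$ containing $\lnot q_0$ is never minimal. What you need is an escape hatch that fires only when \emph{every} literal of $C$ is flipped. Then condition (a) holds for free and minimality encodes unsatisfiability.

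The paper takes $\pi(0)=\emptyset$, $C=\L{\pi}$ and $\psi=q\land(\lnot(x_1\land\cdots\land x_k)\rightarrow\phi)$.
\emph{(i)} $\pi^{C}$ is the all-true state, which satisfies $\psi$.
\emph{(ii)} Any proper $X'\subset C$ either leaves $q$ false, or leaves some $x_i$ false and hence must satisfy $\phi$.

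You must first dispose of the case where the all-true assignment satisfies $\phi$, by mapping it to a fixed no-instance. Otherwise a model of $\phi$ need not yield a \emph{proper} subset; for example, with $\phi=x_1\land\cdots\land x_k$, $C$ would be a CXp even though $\phi$ is satisfiable.
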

\begin{appendixonly}
\begin{proof}[Proof] %
[Membership.] By Proposition~\ref{prop:causes}, $C$ is a CXp if and only if (a) $\pi^C \models \varphi$ and (b) $\pi^{C'} \not\models \varphi$ for every $C' \subset C$. Condition (a) is a single model check on a fixed trace,
feasible in polynomial time~\cite{DBLP:journals/jair/FiondaG18}. Condition (b) is in coNP since a counterexample is a set $C' \subset C$ with $\pi^{C'} \models \varphi$, that can be checked in polynomial time. 

[Hardness.] We reduce from UNSAT. Let $\phi$ be a CNF formula over $X=\{x_1,\dots,x_k\}$. We assume w.l.o.g.\ that the all-true assignment does not satisfy $\phi$, this can be checked in polynomial time, and if it
does, we output a fixed instance that is not a CXp (e.g.\ any $(\pi,\varphi)$ with $\pi\models\varphi$). Set $\A = X\cup\{q\}$ with $q$ fresh, $|\pi|=1$, $\pi(0)=\emptyset$, and $\psi = q \land \big(\lnot(x_1\land\dots\land x_k)\rightarrow\phi\big)$. Since $q$ is false in $\pi$, $\pi\not\models\psi$, so $(\pi,\psi)$ is a well-formed explanation problem. Let $C=\L{\pi}$, we claim $C$ is a CXp for $(\pi,\psi)$ iff $\phi$ is unsatisfiable.
Since $\pi^C$ sets every atom to true, $\pi^C\models\psi$: $q$ holds and the antecedent $\lnot(\bigwedge_i x_i)$ is false. By Proposition~\ref{prop:causes}, $C$ is a CXp iff no proper subset $X'\subset C$ satisfies $\pi^{X'}\models\psi$; we show this holds iff $\phi$ is unsatisfiable.

$(\rightarrow)$ Suppose $\phi$ is unsatisfiable. Take any $X'\subset C$. If $q\notin X'$, then $q$ stays false and $\pi^{X'}\not\models\psi$. If $q\in X'$, then since $X'$ is proper some $x_i$ stays false, so the
antecedent $\lnot(\bigwedge_i x_i)$ holds and $\pi^{X'}\models\psi$ would require $\pi^{X'}\models\phi$; but $\phi$ is unsatisfiable, so no such $X'$ exists. Hence no proper subset satisfies $\psi$, and $C$ is a CXp.

$(\leftarrow)$ Suppose $\phi$ is satisfiable. By the w.l.o.g. assumption, it has a model $M$ different from all-true. Let $X'=\{q\}\cup\{x_i : M(x_i)= true\}$. Then $X'\subset C$, and $\pi^{X'}\models\psi$. Thus a proper subset satisfies $\psi$, and $C$ is not a CXp.
\end{proof}
\end{appendixonly}

\section{Computing AXps and CXps in ASP}
We build upon the approach of~\cite{DBLP:conf/aaai/FiondaIR26}, that describes an Answer Set Programming (ASP) logic program $\Pi^k(\varphi)$, such that its stable models correspond one-to-one to traces $\pi$ such that $|\pi| \le k$ and $\pi\models\varphi$. $\Pi^k(\varphi)$ is a typical guess-and-check ASP encoding, with a \emph{guess} program $\Pi^k_G(\varphi)$ that generates candidate traces, and a \emph{check} program $\Pi^k_C(\varphi)$ that discards traces that are not models of $\varphi$. 
More in detail, $\Pi^k_C(\varphi)$ is a stratified logic program that evaluates the syntactic tree of $\varphi$ in a bottom-up fashion over a guessed trace in $\Pi^k_G(\mathcal{A})$, discarding answer sets that encode traces that are not models of $\varphi$.

\begin{appendixonly}
We briefly describe the original encoding for bounded \LTLf satisfiability. 

\paragraph{Formulae \& traces} Consider a formula $\varphi$, and suppose each node in its syntax tree is assigned a unique integer. We can reify $\varphi$ into ASP facts by means of predicates $until/3$, $negate/2$, $conjunction/2$, $atomic/2$, $next/2$, $true/1$, where the first term always refers to the unique identifier of the subformula and the other terms refer to identifiers of the most immediate subformulae (children). Predicate names are self-explanatory, and correspond to the given temporal/propositional operator.

Wlog, the root of $\varphi$ is identified by 0. We denote this set of facts as $[\varphi]$. For example, the formula $\lnot a \Until (\Next b)$ would be encoded as:
\texttt{until(0,1,2). negate(1,3). atomic(3,a). next(2,4). atomic(4,b). root(0).}

A trace $\pi$ can be encoded by means of facts $trace(t,a)$ if $a \in \sigma_t$, and $time(t)$ if $t < |\pi|$, where $trace(t,a)$ holds if $a \in \pi(t)$, and $time(i)$ if $|\pi| = k$ and $0 \le i \le k-1$. 
For example, the trace $\{a,b\}\cdot\{\}\cdot\{c\}$ is encoded as:
\texttt{time(0..2). trace(0,a). trace(0,b). trace(2,c).}
We denote this set of facts as $[\pi]$.

\paragraph{Semantics of temporal operators} The ASP encoding of temporal operators makes use of predicates with signature $holds/2$, $trace/2$, and $time/1$. An atom $holds(t,x)$ models that the subformula rooted in $x$ is true at time $t$. The atom $time(i)$ states that the trace has an $i$-th state, i.e. it has length at least $i$. The following logic program $\Pi_{LTLf}$ models the semantics of temporal operators:

\footnotesize{
\begin{verbatim} 
holds(T,X) :- atomic(X,A), trace(T,A).
holds(T,X) :- next(X,F), holds(T+1,F).
holds(T,X) :- conjunction(X,_), time(T), 
    holds(T,F): conjunction(X,F).
holds(T,X) :- negate(X,F), time(T), not holds(T,F).
holds(T,X) :- until(X,F,G), holds(T,G).
holds(T,X) :- until(X,F,G), holds(T,F), holds(T+1,X).
sat :- holds(0,X), root(X).
\end{verbatim}
}

Each rule closely mirrors the definition of the \emph{next normal form} transformation of each temporal operator, so correctness follows from previous results~\cite{DBLP:journals/jair/FiondaG18}; since this is a locally stratified logic program~\cite{DBLP:series/synthesis/2012Gebser}, it admits a unique answer set containing $sat$ iff $\pi \models \varphi$.

\paragraph{Search for a model}\label{subsec:search}
The above encoding evaluates the formula $\varphi$ over a trace $\pi$, so it is sufficient to replace the set of facts $[\pi]$ by some choice rules that generates all possible traces to lift it for checking bounded satisfiability. Consider the following logic program $\Pi_{Search}^k$:

\footnotesize{
\begin{verbatim}
symbol(X) :- atomic(_,X).
{ last_instant(T): T=0..k-1 } = 1.
time(0..T) :- last_instant(T).
{ trace(T,A): symbol(A) } :- time(T).
:- not sat.
\end{verbatim}
}

Above, \texttt{k} is a runtime \textsc{clingo} constant which denotes the search bounds. The first choice rule yields the length of the guessed trace, $last\_instant/1$, whereas the second one generates all possible states (given symbols that appear in $\varphi$), for each time point, thus all possible traces of length up to $k$. 
The constraint discards traces which are not models of $\varphi$. 
If the above logic program is satisfiable, projecting answer sets on predicates $trace/2$, $time/1$ allows one to obtain satisfying traces.

Thus, for the purposes of our paper, we have that $\Pi^k_G(\varphi) = \Pi^k_{Search}$, and $\Pi^k_C(\varphi) = [\varphi] \cup \Pi_{LTLf}$.
\end{appendixonly}

To apply our approach, we require to edit on $\Pi^k_G(\varphi)$, so that we fix the input trace $\pi$ as the unique model of the guess program.
This is done as follows.

Let $\pi \in (2^\mathcal{A})^t$. The program $G(\pi)$ containing a pair of rules of the form:

{\footnotesize \begin{verbatim} 
    trace(t,a) :- not ntrace(t,a).
    ntrace(t,a) :- not trace(t,a).
\end{verbatim}}

\noindent for each $(a, t) \in \mathcal{A} \times \{0, \dots, |\pi|-1\}$, guesses the possible interventions on an input trace $\pi$, that is an arbitrary trace of length $k$ by Lemma~\ref{lemma:tc} with $E=\emptyset$. 

Let's introduce the set of constraints $C(S) = \{\texttt{:- not trace(t,}a\texttt{)}:  a_t \in S\} \cup \{\texttt{:- not ntrace(t,}a\texttt{)}: \lnot a_t \in S\}$. 
Intuitively, these constraints \emph{lock} certain events in the models of $G(\pi)$, playing the same role as $\Comp{\pi}{\cdot}$.

\begin{proposition}
Let $\pi$ be a trace over $\mathcal{A}$, $S \subseteq \L{\pi}$. Models of $G(\pi) \cup C(S)$ are in one-to-one correspondence with $\Comp{\pi}{S}$. 
\end{proposition}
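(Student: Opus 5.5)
I will make the correspondence explicit through a map $\mu$ from answer sets of $G(\pi)\cup C(S)$ to traces. For an answer set $M$, let $\mu(M)$ be the trace $\tau$ of length $|\pi|$ with $\tau(t)=\{a\in\A : \texttt{trace(t,}a\texttt{)}\in M\}$. The proof then has three parts: characterize the answer sets of $G(\pi)$ alone, determine how the constraints $C(S)$ filter them, and check that $\mu$ is a bijection onto $\Comp{\pi}{S}$.

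\textbf{Answer sets of $G(\pi)$.} The program $G(\pi)$ is a disjoint union of independent even cycles, one pair of rules for each $(a,t)\in\A\times\{0,\dots,|\pi|-1\}$. I will argue via the GL-reduct that for each pair exactly one of \texttt{trace(t,}$a$\texttt{)} and \texttt{ntrace(t,}$a$\texttt{)} belongs to an answer set $M$. Let $I$ be a candidate interpretation.
\begin{itemize}
\item If both atoms are in $I$, both rules are deleted in the reduct, so $I$ is not minimal.
\item If neither atom is in $I$, both rules become facts, so $I$ is not a model of the reduct.
\item If exactly one atom is in $I$, the reduct keeps exactly the fact deriving it, which is the stable case.
\end{itemize}
Hence answer sets of $G(\pi)$ correspond exactly to total choices, i.e.\ to complete valid sets of temporal literals, and therefore to traces in $(2^\A)^{|\pi|}$. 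The map $\mu$ is injective because $M$ is determined by its \texttt{trace} atoms, with the \texttt{ntrace} atoms being their complement. This is the case $E=\emptyset$ mentioned in the text, consistent with Lemma~\ref{lemma:tc}.

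\textbf{Effect of the constraints.} Adding the constraints $C(S)$ does not create new answer sets. Since constraints have empty heads, the answer sets of $P\cup C$ are exactly the answer sets of $P$ that violate no constraint in $C$; I will cite this standard splitting fact rather than reprove it. An answer set $M$ survives $C(S)$ precisely when two conditions hold:
\begin{itemize}
\item for every $a_t\in S$, $\texttt{trace(t,}a\texttt{)}\in M$, i.e.\ $a\in\mu(M)(t)$;
\item for every $\lnot a_t\in S$, $\texttt{ntrace(t,}a\texttt{)}\in M$, i.e.\ $a\notin\mu(M)(t)$.
\end{itemize}
Together these say $S\subseteq\L{\mu(M)}$. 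Since $S\subseteq\L{\pi}$ by hypothesis, this is equivalent to $S\subseteq\L{\pi}\cap\L{\mu(M)}$, which is the condition $\mu(M)\in\Comp{\pi}{S}$.

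\textbf{Bijection.} Injectivity of $\mu$ was shown in the first part. For surjectivity, take $\tau\in\Comp{\pi}{S}$ and build $M_\tau$ by including \texttt{trace(t,}$a$\texttt{)} if $a\in\tau(t)$ and \texttt{ntrace(t,}$a$\texttt{)} otherwise. By the first part, $M_\tau$ is an answer set of $G(\pi)$, and by the second part it satisfies $C(S)$.

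The only step that needs real care is the stable-model characterization of the even cycles in the first part; the remaining steps are bookkeeping.
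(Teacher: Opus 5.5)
Your proof is correct and takes the same route as the paper's sketch. The paper simply observes that $G(\pi)$ generates all of $(2^\A)^{|\pi|}$ and that the constraints in $C(S)$ force the literals of $S$; your GL-reduct analysis of the even cycles, the constraint-pruning step, and the explicit bijection make those two observations rigorous.
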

\begin{appendixonly}
\begin{proof}[Proof (Sketch).] By construction observing that $G(\pi)$ generates $(2^\mathcal{A})^t$ and constraints in $C(S)$ force the occurrence/absence of the temporal literals $\ell_t \in S$.
\end{proof}
\end{appendixonly}

The logic program $\Pi^k_C(\varphi)$ of the original encoding of~\cite{lpnmmr22} models the satisfaction of $\varphi$, thus:

\begin{proposition}
Answer sets of $G(\pi) \cup C(S) \cup \Pi^k_C(\varphi)$ are in one-to-one correspondence with traces in $\Comp{\pi}{S} \cap \Mod{\varphi}$. 
\end{proposition}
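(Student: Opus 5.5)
The plan is to combine the preceding proposition with the correctness of the checking program $\Pi^k_C(\varphi)$, which is a stratified program layered on top of the guess program. Set $P_G = G(\pi)\cup C(S)$. By the preceding proposition, the answer sets of $P_G$ are in bijection with $\Comp{\pi}{S}$: each answer set $M$ contains exactly one of \texttt{trace(t,a)}, \texttt{ntrace(t,a)} for every pair $(a,t)$, and thus encodes a trace $\tau_M$ of length $|\pi|$ that agrees with $\pi$ on $S$. We also fix $k=|\pi|$ in $\Pi^k_C(\varphi)$, so that \texttt{time/1} covers exactly the instants $0,\dots,|\pi|-1$. This can be done by adding facts \texttt{time(0..|\pi|-1)} to $G(\pi)$, since the length guess of $\Pi^k_{Search}$ has been replaced.

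The key tool is the Splitting Set Theorem. The atoms \texttt{trace/2}, \texttt{ntrace/2}, and \texttt{time/1}, together with the reified facts $[\varphi]$, form a splitting set $U$. No rule of $\Pi^k_C(\varphi)$ has a head in $U$, and the rules of $P_G$ mention only atoms of $U$. Hence every answer set of the whole program has the form $M\cup N$, where $M$ is an answer set of $P_G$ and $N$ is an answer set of the partial evaluation of $\Pi_{LTLf}$ with respect to $M$; this includes the constraint \texttt{:- not sat.} inherited from the check part. The partial evaluation is exactly $\Pi_{LTLf}$ run over the facts $[\tau_M]\cup[\varphi]$. That program is locally stratified, so it has a unique answer set, and by the correctness result recalled from the encoding (next normal form, following Fionda and Greco) this answer set contains \texttt{sat} iff $\tau_M\models\varphi$.

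It follows that the constraint eliminates $M$ exactly when $\tau_M\not\models\varphi$. When $\tau_M\models\varphi$, there is exactly one extension $N$, since the stratified part has a unique answer set. The map $M\cup N\mapsto\tau_M$ is therefore a bijection onto $\{\tau\in\Comp{\pi}{S}:\tau\models\varphi\}=\Comp{\pi}{S}\cap\Mod{\varphi}$. Injectivity follows from the bijection for $P_G$ together with the uniqueness of $N$; surjectivity follows because every compatible model $\tau$ yields an answer set $M$ of $P_G$, whose extension survives the constraint.

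The main obstacle is the bookkeeping of which program pieces are actually included. In the appendix, $\Pi^k_C(\varphi)=[\varphi]\cup\Pi_{LTLf}$ does not contain \texttt{:- not sat.}, which sits in $\Pi^k_{Search}$, and $G(\pi)$ does not define \texttt{time/1}. I would first make these components explicit: \texttt{time} facts and the \texttt{sat} constraint are retained alongside $G(\pi)$ when replacing $\Pi^k_G$. Then I would verify that the splitting condition holds, namely that no rule defining \texttt{holds} feeds back into \texttt{trace}. The remaining details are routine.
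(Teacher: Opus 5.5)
Your proof is correct and follows the paper's own justification, which is only the one-line remark that the preceding proposition (answer sets of $G(\pi)\cup C(S)$ correspond to $\Comp{\pi}{S}$), combined with the correctness of the stratified check program $\Pi^k_C(\varphi)$, yields the claim; your Splitting Set Theorem argument is a faithful way of making that composition rigorous. Your bookkeeping repairs are genuinely needed for the statement to hold as written: adding \texttt{time} facts for the instants $0,\dots,|\pi|-1$, fixing $k=|\pi|$, and keeping the constraint \texttt{:- not sat.} with the check part. The one harmless slip is that the facts $[\varphi]$ do have heads in $U$, which does not affect the splitting condition.
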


Note also that whenever $S = \L{\pi}$, the program $G(\pi) \cup C(S)$ admits an unique model, that is the trace $\pi$. Thus, if $\pi\not\models\varphi$, we have that $G(\pi) \cup C(\L{\pi}) \cup \Pi^k_C(\varphi)$ is incoherent. Consequently, it admits unsatisfiable cores and correction sets~\cite{DBLP:journals/ai/AlvianoDFPR23}. 
In particular, we are interested in MUSes and MCSes in terms of the constraints in $C(S)$. 

Let $O = \{o(t,a): a \in \mathcal{A}, 0 \le t < |\pi|\}$ be a set of fresh objective atoms. Let $C^*(S) = \{\texttt{:- o(t,a), not trace(t,}a\texttt{)}: a_t \in S\} \cup \{\texttt{:- o(t,a), not ntrace(t,}a\texttt{)}: \lnot a_t \in S\}$. Intuitively, each $o(t,a) \in O$ acts as a \emph{selector} to include/exclude a constraint in $C(S)$, effectively \emph{locking} certain temporal literals to assume the same value as in $\pi$. 

\begin{proposition}
Let $U \subseteq O$. Let $\Pi = G(\pi) \cup \Pi^k_C(\varphi) \cup C^*(\L{\pi})$. We have that $enforce(\Pi,O,U)$ is incoherent if and only if $S = \{a_t: o(t,a) \in U, a \in \pi(t)\} \cup \{\lnot a_t: o(t,a) \in U, a\not\in \pi(t)\}$ is a WAXp.
\end{proposition}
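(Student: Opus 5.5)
The plan is to reduce the statement to the previous proposition, which characterizes the answer sets of $G(\pi)\cup C(S)\cup\Pi^k_C(\varphi)$ as $\Comp{\pi}{S}\cap\Mod{\varphi}$. The core claim to establish is that, for the given $U$ and the associated $S$, the answer sets of $enforce(\Pi,O,U)$ correspond (after projecting away the objective atoms) to the answer sets of $G(\pi)\cup C(S)\cup\Pi^k_C(\varphi)$. Incoherence of $enforce(\Pi,O,U)$ is then equivalent to $\Comp{\pi}{S}\cap\Mod{\varphi}=\emptyset$. Since $S\subseteq\L{\pi}$ by construction, and we work in the setting $\pi\not\models\varphi$ fixed at the start of the section, this is exactly Definition~\ref{def:waxp}.

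First, I check that $S$ is the image of $U$ under the obvious bijection between $O$ and $\L{\pi}$. Each $o(t,a)$ is mapped to the unique literal over $a_t$ that belongs to $\L{\pi}$, so $S\subseteq\L{\pi}$. Moreover, the constraint of $C^*(\L{\pi})$ guarded by $o(t,a)$ is precisely the constraint of $C(\L{\pi})$ for that literal, with the extra body atom $o(t,a)$ added.

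Second, I analyse $enforce(\Pi,O,U)$. The choice rules over $O$ let every objective atom be true or false. The constraints $\texttt{:- } \naf o$ for $o\in U$ force the atoms of $U$ to be true. Hence, in any answer set, the guarded constraints for $o\in U$ become active and coincide with $C(S)$. For $o\notin U$, the objective atom may be taken false, which deactivates its constraint.

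The objective atoms appear in no rule heads of $\Pi$ other than the choice rules, and appear only in constraint bodies. The splitting set theorem (or a direct GL-reduct argument) then gives the following. An interpretation $M$ is an answer set of $enforce(\Pi,O,U)$ if and only if $M\setminus O$ is an answer set of $G(\pi)\cup\Pi^k_C(\varphi)$ that satisfies every constraint whose guard $o(t,a)$ lies in $M\cap O$, where $U\subseteq M\cap O$.

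Third, I prove the two directions. If $\tau\in\Comp{\pi}{S}\cap\Mod{\varphi}$, the previous proposition yields an answer set $N$ of $G(\pi)\cup C(S)\cup\Pi^k_C(\varphi)$, and $N\cup U$ is then an answer set of $enforce(\Pi,O,U)$. Conversely, take an answer set $M$ of $enforce(\Pi,O,U)$. It contains $U$, so $M\setminus O$ satisfies $C(S)$ and is an answer set of $G(\pi)\cup C(S)\cup\Pi^k_C(\varphi)$, because constraints only filter answer sets. Its trace therefore lies in $\Comp{\pi}{S}\cap\Mod{\varphi}$. Hence coherence holds if and only if the intersection is nonempty.

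The main obstacle is the middle step: making rigorous that adding the guarded constraints and the objective choices does not change the stable-model semantics of the remaining program beyond filtering. I would handle it with the splitting set theorem, using the atoms of $O$ as the bottom part, together with the standard fact that adding constraints only removes answer sets.
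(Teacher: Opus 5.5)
Your proposal is correct and follows the same idea as the paper's one-line sketch. The selectors in $U$ switch on exactly the locking constraints $C(S)$, so by the preceding proposition $enforce(\Pi,O,U)$ is coherent iff $\Comp{\pi}{S}\cap\Mod{\varphi}\neq\emptyset$. Your splitting-set argument makes this rigorous, up to the bookkeeping of the auxiliary atoms $\mathit{no}_i$: these must be added to $N\cup U$, and stripped from $M$ along with $O$.

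You also treat selectors outside $U$ as free rather than forced false, which is the accurate reading (the sketch's word ``forbids'' is loose here). That is exactly why your converse direction needs the remark that additional active constraints only filter answer sets.
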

\begin{appendixonly}
    
\begin{proof}[Proof (Sketch).]
Follows from observing that including or excluding objective atoms in $U$ respectively forces or forbids the corresponding temporal literals to match $\pi$.
\end{proof}
\end{appendixonly}

Notice this also implies the results holds for MUSes and AXps, since those are minimal unsatisfiable cores and minimal WAXps. The analogous for MCSes follows trivially from minimal hitting set dualities between MUSes/MCSes in ASP programs~\cite{DBLP:journals/ai/AlvianoDFPR23}, and AXps/CXps~\cite{joao}:

\begin{proposition}
Let $U \subseteq O$. Let $\Pi = G(\pi) \cup \Pi^k_C(\varphi) \cup C^*(\L{\pi})$. We have that $enforce(\Pi,O,O\setminus U)$ is coherent if and only if $S = \{a_t: o(t,a) \in U, a \in \pi(t)\} \cup \{\lnot a_t: o(t,a) \in U, a\not\in \pi(t)\}$ is a WCXp.
\end{proposition}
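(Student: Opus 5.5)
The plan is to derive the statement directly from the preceding proposition, which characterizes incoherence of $enforce(\Pi,O,\cdot)$ in terms of WAXps, together with the complementation relationship between WAXps and WCXps that was already used for Corollary~\ref{prop:waxp-conp}. No new reasoning about the ASP encoding should be needed beyond what that proposition provides.

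Given $U \subseteq O$, write $S_U$ for the set of literals associated with $U$, as in the statement. First I observe that the map $U \mapsto S_U$ is a bijection between subsets of $O$ and subsets of $\L{\pi}$. This holds because each objective atom $o(t,a)$ corresponds to exactly one literal of $\L{\pi}$, namely $a_t$ or $\lnot a_t$ according to whether $a \in \pi(t)$. The map also commutes with complement, so $S_{O\setminus U} = \L{\pi}\setminus S_U$.

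Next, I apply the previous proposition to $U' = O\setminus U$. It states that $enforce(\Pi,O,O\setminus U)$ is incoherent if and only if $S_{O\setminus U} = \L{\pi}\setminus S_U$ is a WAXp. Negating both sides gives that $enforce(\Pi,O,O\setminus U)$ is coherent if and only if $\L{\pi}\setminus S_U$ is not a WAXp. We have $\pi\not\models\varphi$ throughout, since this is the standing assumption that makes the program with all constraints incoherent. Under this assumption, Definition~\ref{def:waxp} makes "not a WAXp" equivalent to $\Comp{\pi}{\L{\pi}\setminus S_U}\cap\Mod{\varphi}\neq\emptyset$. By Definition~\ref{def:wcxp}, that is exactly the condition for $S_U$ to be a WCXp, which closes the chain of equivalences.

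The only delicate point is the semantics of $enforce$ for objective atoms in $U$, which are not enforced. These atoms may be chosen true, which activates additional locking constraints, or false. So I must check that coherence is witnessed by answer sets in which the atoms of $U$ are false; then only the constraints for $O\setminus U$ are active, and the set of surviving traces is $\Comp{\pi}{\L{\pi}\setminus S_U}$. Making the atoms of $U$ true only adds constraints and therefore only removes answer sets, so an answer set exists if and only if one exists with those atoms false. This is the same monotonicity argument implicit in the sketch of the previous proposition. I would state it explicitly, since it is the only step that relies on the ASP construction rather than on earlier results.
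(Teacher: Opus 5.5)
Your proof is correct, but it takes a different route from the one the paper indicates.

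\textbf{What the paper does.} The paper gives no separate argument for this proposition. It says the MCS/WCXp version ``follows trivially'' from the MUS/MCS hitting-set duality for ASP programs combined with the AXp/CXp duality. In other words, it transports the MUS$\leftrightarrow$AXp correspondence of the previous proposition across two minimal-hitting-set dualities.

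\textbf{What you do.} You stay entirely at the level of weak explanations. You note that $U\mapsto S_U$ is a bijection $2^{O}\to 2^{\L{\pi}}$ that commutes with complement, and you instantiate the previous proposition at $O\setminus U$. You then finish with the elementary complementation fact that the paper itself uses for Corollary~\ref{prop:waxp-conp}: for $\pi\not\models\varphi$, $E$ is a WAXp iff $\L{\pi}\setminus E$ is not a WCXp.

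\textbf{What each route buys.} Your route fits the statement more exactly, because the statement is about coherence and WCXps rather than about minimal objects. The duality route yields MCS$\leftrightarrow$CXp directly. Recovering the weak statement from it would additionally require the upward closure of correction sets and of WCXps. What the duality route buys instead is the link to MARCO's joint MUS/MCS enumeration, which is why the paper frames it that way.

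\textbf{Two minor remarks.} First, making $\pi\not\models\varphi$ explicit is genuinely necessary. If $\pi\models\varphi$, the program is coherent for every $U$, while no set is a WCXp. Second, your closing monotonicity check on the free atoms of $U$ is already contained in the previous proposition applied with $O\setminus U$ enforced. You only need it if you do not want to rely on that proposition's proof sketch.
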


\section{Experiments}
We evaluate the effectiveness of MUSes/MCSes enumeration for computing AXps and CXps of \LTLf specifications. 

\paragraph{Implementation.} Our implementation is a Python script that parses explanation problems of the form $(\pi,\varphi)$ into a logic program, and then applies the MARCO~\cite{DBLP:journals/constraints/LiffitonPMM16} algorithm for ASP~\cite{DBLP:journals/ai/AlvianoDFPR23}, to jointly enumerate MUSes (AXps) and MCSes (CXps). We use the \texttt{clingo}~\cite{DBLP:journals/tplp/GebserKKS19} ASP solver.

\begin{appendixonly}
\paragraph{Execution Environment.} Experiments were run on a machine equipped with a 13th Gen Intel(R) Core(TM) i7-13700 CPU and 32 GB of RAM, with 16 physical cores and 24 logical CPUs, running Ubuntu 22.04. In the experiments, we set a CPU-time limit of 10 seconds, a wall-clock timeout of 20 seconds, and a memory limit of 2 GB of RAM.    
\end{appendixonly}

\paragraph{Data.} Our benchmark is built from \emph{Declare}~\cite{DBLP:journals/ife/AalstPS09}, a
declarative process modeling language based on a set of templates whose semantics is given in \LTLf. In this setting, process discovery algorithms learn Declare models---that one can translate in \LTLf formulas---from event logs, i.e., multi-sets of traces recording real process executions. We choose Declare because it is a major application of \LTLf and, to the best of our knowledge, the only domain providing both real-world traces and \LTLf formulas that are learned from data rather than hand-crafted.  We consider a collection of 6 real-world event logs (from the BPI Challenge 2020~\cite{bpichallenge2020}, and the Sepsis event log~\cite{DBLP:conf/emisa/MannhardtB17}). For each event log, we learn a Declare model using the Declare4Py library~\cite{donadello2022declare4py}, which we translate into an \LTLf formula $\varphi$. For each \emph{variant} $\pi$ (i.e., unique trace) in the event log, we generate one explanation problem: $(\pi,\varphi)$
when $\pi\not\models\varphi$, and $(\pi,\lnot\varphi)$
when $\pi\models\varphi$. We obtain a total of $3467$ explanation problems, and we discard 45 where the corresponding \LTLf formula is incoherent--which would admit a single (empty) AXp and no CXps. This yields a total of 3422 instances.

\paragraph{Results.}
Table~\ref{tab:results} reports statistics on the number, size of AXps and CXps found by our approach, partitioned by corresponding event log. Top-half reports the number of AXps/CXps found; bottom-half their size. We distinguish between \emph{complete} instances, where enumeration terminates and \emph{incomplete} instances where it does not terminate\footnote{MARCO is an anytime algorithm, outputs AXps/CXps as soon as they are found, and terminates when \emph{all} CXps/AXps are found.}. The approach is effective, as in a short timeout we are able to complete about 30\% of the instances, while on the remaining \emph{incomplete} instances we are still able to output several CXps and AXps, with a few exceptions (155) where we find zero CXps. However, this is due to the fact that the enumeration algorithm jointly enumerates CXps and AXps, and those instances contain a large amount of AXps. Additionally, we report no memory out errors, and an average (median) time-to-first AXp of $0.091 \pm 0.032$s (0.09s) and an average time-to-first CXp of $1.05 \pm 1.58$s (0.41s).

\paragraph{Comparison with~\citet{meng2025causal}.} 
    The only existing tool that can compute our CXps is \texttt{caupybara}~\cite{meng2025causal}, which computes causes (equivalent to CXps) by brute-force. Candidate sets of increasing size (up to a user-provided upper bound on size) are enumerated, and checked for being a cause. As one may notice in Table~\ref{tab:results}, CXps of large size are not unlikely in real-world event logs. Thus, it is expected the approach scales poorly, as the size of causes increases. We highlight this limitation with a synthetic formula. Let $p(i) = \LTLAlways(a_i \rightarrow \LTLWNext\, b_i)$, $P(N) = \bigwedge_{i=1}^{N} p(i)$, and $\tau_N = \{a_1\}\{a_2\}\cdots\{a_N\}\{\}$, where the final empty state ensures the last weak next is not trivially satisfied. Every conjunct $p(i)$ is violated, so $\tau_N \not\models P(N)$, and the instance has exactly $N$ AXps of size~2 (each pairing the presence of $a_i$ with the absence of $b_i$) and, by hitting-set duality, $2^N$ CXps of size $N$. Altering the trace to satisfy $P(N)$ must satisfy each $p(i)$, choosing for each $i$ whether to remove $a_i$ or add $b_i$. Thus, one needs to enumerate up to $N$. Note that the conjunct $p(i)$ is not a ``pathological'' formula, but a rather common specification pattern in \LTLf specifications (``every time $a_i$ happens, either the trace concludes or $b_i$ happens next''). In particular, this is known as \emph{Chain Response} in Declare (with a strong next replacing the weak).
    
    Figure~\ref{fig:scalability} reports the time to compute one and all CXps as $N$ grows, under the same 10-second budget as before, with \texttt{caupybara} run at $\mathit{bound}=N$. \texttt{caupybara} exceeds the budget beyond $N{=}4$, \emph{even for a single CXp}. Our approach, instead, manages to enumerate all CXps up to $N{=}12$, while also computing AXps.  
    A CXp-only variant of our approach (enumerating \emph{minimal stable models}~\cite{DBLP:journals/ai/AlvianoDFPR23} to obtain MCSes/CXps, rather than applying MARCO-style joint MUS/MCS enumeration) matches \texttt{caupybara}'s task exactly, and finds a first CXp in under $0.1$s, but performs worse in enumeration, completing only up to $N{=}10$, overall slowing down in the CXp enumeration task. 
    
\begin{figure}[t]
    \centering
    \includegraphics[width=0.70\linewidth]{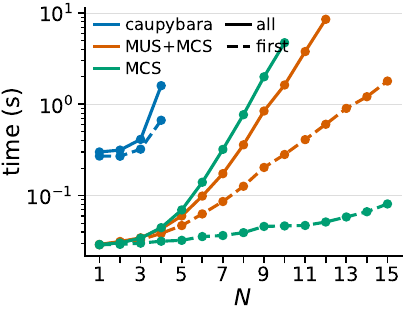}
    \caption{Time to the first and to all CXps on $P(N)$, $\tau_N$, for \texttt{caupybara}, our full approach, and a CXp-only variant. Dashed: first CXp; solid: all CXps. 10 seconds timeout.}
    \label{fig:scalability}
\end{figure}

\section{Conclusion}
In this paper, we specialize abductive and contrastive explanations to \LTLf. We provide an idiomatic definition that side-steps the classifier view the original definitions require, prove the two equivalent, and prove contrastive explanations equivalent to a recently-proposed notion of temporal causality. This characterization lets us enumerate explanations efficiently by exploiting MUS-MCS duality, which we instantiate in Answer Set Programming. Experiments show the approach is effective on real-world event logs.

\bibliography{aaai2026}

\appendix

\end{document}